\crefname{equation}{}{}
\crefname{proposition}{Proposition}{Propositions}
\newcommand{\naturals}{\mathbb{N}}
\newcommand{\real}{\mathbb{R}}
\newcommand{\realpos}{\mathbb{R}_{> 0}}
\newcommand{\until}[1]{[#1]}
\newcommand{\map}[3]{#1:#2 \rightarrow #3}
\newcommand{\longthmtitle}[1]{\mbox{}{\textit{(#1):}}}
\newcommand{\setdef}[2]{\{#1 \; | \; #2\}}
\newcommand{\setdefb}[2]{\big\{#1 \; | \; #2\big\}}
\newcommand*{\SetSuchThat}[1][]{} 
\newcommand*{\MvertSets}{%
    \renewcommand*\SetSuchThat[1][]{%
        \mathclose{}%
        \nonscript\;##1\vert\penalty\relpenalty\nonscript\;%
        \mathopen{}%
    }%
}
\DeclarePairedDelimiterX \Set [2] {\lbrace}{\rbrace}
    {\,#1\SetSuchThat[\delimsize]#2\,}
\newcommand{\Cc}{\mathcal{C}}
\newcommand{\Ec}{\mathcal{E}}
\newcommand{\Kc}{\mathcal{K}}
\newcommand{\Uc}{\mathcal{U}}
\newcommand{\Sc}{\mathcal{S}}
\newcommand{\Wc}{\mathcal{W}}
\newcommand{\R}{\mathbb{R}}
\newcommand{\Ke}{\Kc^{\rm e}}
\newcommand{\defeq}{\triangleq}
\newtheorem{theorem}{Theorem}
\newtheorem{lemma}{Lemma}
\newtheorem{corollary}{Corollary}
\newtheorem{proposition}{Proposition}
\theoremstyle{definition}
\newtheorem{definition}{Definition}
\newtheorem{remark}{Remark}
\renewcommand{\bf}{\mathbf{f}} 
\newcommand{\bg}{\mathbf{g}}
\newcommand{\bk}{\mathbf{k}}
\newcommand{\bmm}{\mathbf{m}} 
\newcommand{\bu}{\mathbf{u}}
\newcommand{\bv}{\mathbf{v}}
\newcommand{\bx}{\mathbf{x}}
\newcommand{\by}{\mathbf{y}}
\newcommand{\bA}{\mathbf{A}}
\newcommand{\bB}{\mathbf{B}}
\newcommand{\bD}{\mathbf{D}}
\newcommand{\bF}{\mathbf{F}}
\newcommand{\bH}{\mathbf{H}}
\newcommand{\bI}{\mathbf{I}}
\newcommand{\bL}{\mathbf{L}}
\newcommand{\bM}{\mathbf{M}}
\newcommand{\bP}{\mathbf{P}}
\newcommand{\bV}{\mathbf{V}}
\newcommand{\bLambda}{\boldsymbol{\Lambda}}
\newcommand{\balpha}{\boldsymbol{\alpha}}
\newcommand{\bzero}{\mathbf{0}}
\newcommand{\obs}{{\operatorname{o}}}
\newcommand{\des}{{\operatorname{d}}}
\newcommand{\diag}{{\operatorname{diag}}}
\DeclareMathOperator*{\argmin}{argmin}
\title{\LARGE \textbf{Matrix Control Barrier Functions}}
\author{Pio Ong, Yicheng Xu, Ryan M. Bena, Faryar Jabbari, Aaron D. Ames %
\thanks{PO, RB, and AA are with the Department of Mechanical and Civil Engineering, California Institute of Technology, Pasadena, CA 91125, USA, \texttt{\{pioong, ryanbena, ames\}@caltech.edu}.}
\thanks{YX and FJ is with the Department of Mechanical and Aerospace Engineering, University of California, Irvine, CA 92697, USA, \texttt{\{yichex7, fjabbari\}@uci.edu}.}
\thanks{This research was in parts supported by the Technology Innovation Institute.}
}
\begin{document}

\maketitle
\begin{abstract}
    This paper generalizes the control barrier function framework by replacing scalar-valued functions with matrix-valued ones. Specifically, we develop barrier conditions for safe sets defined by matrix inequalities---both semidefinite and indefinite. Matrix inequalities can be used to describe a richer class of safe sets, including nonsmooth ones.   The safety filters constructed from our proposed matrix control barrier functions via semidefinite programming (CBF-SDP) are shown to be continuous. Our matrix formulation naturally provides a continuous safety filter for Boolean-based control barrier functions, notably for disjunctions (OR), without relaxing the safe set. We illustrate the effectiveness of the proposed framework with applications in drone network connectivity maintenance and nonsmooth obstacle avoidance, both in simulations and hardware experiments.
\end{abstract}
\section{Introduction}
Dynamic safety is increasingly recognized as essential in modern control systems alongside stability. While convergence to the equilibrium is important, it is equally critical that safety constraints are respected throughout the entire trajectory. Control barrier functions (CBFs)~\cite{ADA-SC-ME-GN-KS-PT:19,PW-FA:07} have emerged as an effective framework for enforcing safety during system evolution. A key advantage of CBFs is their integration into the safety filters~\cite{TG-MM-AS-PN-EF-ADA:20} as constraints in a quadratic program (QP), which facilitates the integration of multiple safety and stability requirements and enables fast online computation. The success of CBFs coincided with the rise of real-time QP solvers. Today, with advances in semidefinite program (SDP) solvers and computational hardware, it becomes practical to move beyond scalar-valued barrier functions and define richer safe sets via matrix inequalities, motivating the development of \textit{matrix control barrier functions} (MCBF). 

The original CBF framework relies on Nagumo's theorem~(see \cite{MN:42} or~\cite[Ch. 4.2]{FB-SM:07}) to ensure forward invariance of safe sets. To use the result, safety-critical controllers must render the closed-loop dynamics locally Lipschitz. This requirement motivated substantial research into the regularity of safety filters derived from the QP formulation, which is the most common form of CBF controller. For single-constraint QPs, it is simple to guarantee (local) Lipschitz continuity by examining closed-form solutions, while for multiple-constraint QPs, continuity and Lipschitz property have been studied using tools from set-valued analysis and parametric optimization~\cite{BJM-MJP-ADA:13,BJM-MJP-ADA:15,MJ:18,RAF-PVK:96}. Recently, the work~\cite{PM-AA-JC:25} compiles different sufficient conditions, such as the linear independence constraint qualification (LICQ), which guarantees continuity and Lipschitz properties for QP-based safety filters.

In contrast, the modern CBF framework does not rely on Nagumo's theorem. Instead, it requires that the CBF condition hold in a neighborhood of the safe set, enabling safety guarantees from continuous safety filters. This relaxation is particularly important when extending the control formulation beyond QPs. While the most prominent form of CBF-based controllers relies on the QP formulation, the use of other convex optimization classes has also emerged. For instance, measurement-robust CBFs~\cite{RKC-AWS-AJT-TGM-KLB-ADA:21} yield barrier constraints that require second-order cone programs (SOCPs) to formulate control laws. The matrix-valued CBFs proposed in this paper are enforced via SDPs, where Lipschitz continuity guarantees are even more difficult to establish.

Matrix inequalities provide a natural way to describe nonsmooth safe sets, including those defined by Boolean combinations of constraints. Due to their complexity, work on nonsmooth safe sets is relatively sparse. The paper~\cite{PG-JC-ME:17} formalizes a nonsmooth barrier function framework that offers barrier conditions for verifying forward invariance. The work was motivated by the natural nonsmoothness arising from combining multiple CBFs into one via Boolean operations, both for conjunction (AND) and disjunction (OR). Unfortunately, it does not provide a controller construction method for control systems.
Specifically for Boolean-based CBFs, \cite{TGM-ADA:23} proposes using soft-min and soft-max functions to enable the formulation of safety filters with continuity properties. This approach, however, conservatively alters the safe set.

The motivation of this work arises from the problem of connectivity maintenance in multi-robot systems. In such settings, CBFs are especially attractive because they allow the integration of a connectivity constraint alongside other objectives without requiring a complete co-design of the underlying control law. Approaches for connectivity maintenance can be categorized into local and global. Local approaches \cite{MJ-ME:07,MMZ-GJP:05,MDS-JC:09} address the problem by reasoning about the initial network configuration; on the other hand, global approaches \cite{BC-LS:20,PO-BC-LS-JC:23-auto} rely on algebraic graph theory, where the Fiedler eigenvalue~\cite{MF:73,CDG-GFR:01} must remain  positive at all times to ensure connectivity. Unfortunately, eigenvalues are nonsmooth functions of matrices, complicating the design of continuous control laws. Our prior work \cite{PO-BC-LS-JC:23-auto} attempted to resolve this issue by carefully designing the safety filter constraints to ensure continuity, drawing on nonsmooth analysis tools. In contrast, the matrix CBF proposed herein offers a simpler systematic approach to designing safety filters to address these nonsmooth connectivity constraints.

\textbf{Statement of Contributions:} This paper develops the general framework of \textit{matrix control barrier functions}. We formulate safety constraints described by matrix-valued functions rather than scalar ones, investigating both semidefinite and indefinite cases. To this end, we develop barrier conditions that ensure set forward invariance and their counterparts for control systems that ensure control invariance. We propose optimization-based controllers via semidefinite programming (CBF-SDP) and provide the technical analysis of their continuity properties and the bounds they enforce.

Compared to existing approaches, the proposed framework offers several important advantages. Throughout the paper, we investigate the special case with diagonal matrices and discuss how MCBFs can address Boolean compositions of multiple safety constraints. MCBFs directly handle the nonsmooth nature of the problem without resorting to soft-min or soft-max approximations that conservatively relax the original safe set. In particular, the framework elegantly offers the CBF-SDP solution for disjunctive combinations of safety constraints, which was not previously possible. In connectivity maintenance problems, where prior work relied on nested optimization schemes to address nonsmoothness, our approach formulates the problem as a single SDP, simplifying both analysis and implementation. We validate our MCBF framework by applying the proposed results on the connectivity maintenance problem, both in simulation and experiment.

\section{Background on Safety Filter}
\subsection{Notation}
Throughout the paper, we use $\naturals$ and $\real$ for the set of natural and real numbers. For $p\in\naturals$, we denote by $\until{p}$ the set of consecutive numbers $\{1,2,\ldots,p\}$. Given a vector $\bu\in\real^m$, $\|\bu\|$ is its Euclidean norm. Given a matrix $\bM\in\real^{m\times n}$, $M_{ij}$ denotes its $(i,j)$-th entry. We use $\mathbb S^p$ for the space of symmetric real matrices of size $p\times p$. 
We use $\mathbf{1}_p\in\real^p$ for a vector of ones and $\bI_{p\times p}\in\mathbb S^p$ for the identity matrix. Given two matrices $\bA,\bB\in\real^{p\times p}$, $\bA\cdot\bB$ is their Frobenius product. Note that the Frobenius product satisfies the identity: $\bv\bv^\top\cdot \bA = \bv^\top \bA\bv$ for any $\bv\in\real^p$. A function $\map{\alpha}{\real}{\real}$ is of extended class-$\Kc$, denoted $\alpha\in\Ke$, if it is continuous, strictly increasing, and satisfies $\alpha(0)=0$. For a continuously differentiable function $\map{h}{\real^n}{\real}$, its Lie derivative along a vector field $\map{\bf}{\real^n}{\real^n}$ (or along multiple vector fields stacked in a matrix $\map{\bg}{\real^n}{\real^{n\times m}}$) is defined as $L_\bf h(\bx)=\frac{\partial h}{\partial \bx} \bf(\bx)$.

\subsection{Barrier Condition}
Consider the nonlinear autonomous system:
\begin{equation}\label{sys:auto}
    \dot \bx = \bF(\bx)
\end{equation}
where $\bx\in\real^n$ is the system state and $\map{\bF}{\real^n}{\real^n}$ is the system dynamics. For \textit{safety}, we are interested in verifying that the state trajectories $t\rightarrow\bx(t)$ remain inside a safety constraint defined as a sublevel set of the function $\map{\psi}{\real^n}{\real}$:
\begin{equation}\label{eq:safety-constraint}
    \Sc = \setdefb{\bx\in\real^n}{\psi(\bx)\geq 0},
\end{equation}
at all times. In other words, we want the set $\Sc$ to be forward invariant.
\begin{definition}
    \longthmtitle{Forward Invariance}
    \label{def:forward-invariance}
    A set $\Cc\subset\real^n$ is \textbf{forward invariant} for system~\eqref{sys:auto} if, for initial conditions $\bx_0\in\Cc$, all system trajectories $t\rightarrow\bx(t)$ remain inside the set $\Cc$ for all time $t\geq 0$. The set $\Cc$ is \textbf{safe} if it is also a subset of the safety constraint~$\Sc$.~\hfill$\diamond$
\end{definition}
The idea behind a safe set is that it provides a safe operating region, from which the system can be initialized without violating the safety constraint. In the barrier function framework, we construct the safe set using a continuously differentiable function $\map{h}{\real^n}{\real}$ as:
\begin{equation}\label{eq:safeset-scalar}
    \Cc = \setdefb{\bx\in\real^n}{h(\bx)\geq 0},
\end{equation}
typically with $h(\bx)\geq \psi(\bx),\forall \bx$ to ensure $\Cc\subset \Sc$. Forward invariance, on the other hand, can be established using the barrier condition.
\begin{lemma}\longthmtitle{Barrier Condition}
    \label{lem:BC}
    Consider the autonomous system~\eqref{sys:auto} with a continuous dynamics $\bF$ and the set $\Cc$~in \eqref{eq:safeset-scalar}. If there exists a locally Lipschitz function $\alpha\in\Ke$ such that:
    \begin{equation}\label{eq:BC-scalar}
        L_\bF h(\bx) \geq -\alpha(h(\bx))
    \end{equation}
    for all $\bx$ in an open neighborhood $\Ec\supset\Cc$, then set $\Cc$ is forward invariant for the system.~\hfill$\blacksquare$
\end{lemma}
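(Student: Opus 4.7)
The plan is to reduce the vector-valued invariance question to a scalar comparison argument along each trajectory. First, I would pick any $\bx_0\in\Cc$ and any Carathéodory solution $t\mapsto\bx(t)$ of~\eqref{sys:auto} starting at $\bx_0$ on its maximal interval of existence (existence is guaranteed by Peano's theorem, since $\bF$ is continuous; note that $\bF$ need not be Lipschitz, so the argument must work for \emph{every} such solution). Define the scalar trace $y(t)=h(\bx(t))$. Since $h$ is $C^1$, the chain rule gives $\dot y(t)=L_\bF h(\bx(t))$ wherever $\bx$ is differentiable, so whenever $\bx(t)\in\Ec$ the hypothesis~\eqref{eq:BC-scalar} yields the scalar differential inequality $\dot y(t)\geq -\alpha(y(t))$ with $y(0)\geq 0$.

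Second, I would introduce a stopping-time bookkeeping step to ensure the hypothesis stays applicable. Let $\tau=\sup\setdef{t\geq 0}{\bx(s)\in\Ec \text{ for all } s\in[0,t]}$. Openness of $\Ec$, together with $\bx_0\in\Cc\subset\Ec$ and continuity of $\bx(\cdot)$, gives $\tau>0$. On $[0,\tau)$ the scalar inequality above is in force.

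Third, I would invoke a standard differential-inequality comparison. Let $z(t)$ solve $\dot z=-\alpha(z)$ with $z(0)=y(0)\geq 0$; local Lipschitzness of $\alpha$ guarantees a unique solution, and because $\alpha(0)=0$ the constant $z\equiv 0$ is an equilibrium, so uniqueness prevents $z$ from crossing zero and we get $z(t)\geq 0$ on its domain. The comparison lemma then yields $y(t)\geq z(t)\geq 0$ on $[0,\tau)$, i.e.\ $\bx(t)\in\Cc$ for all $t\in[0,\tau)$.

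Finally, I would close the loop by showing $\tau$ equals the full lifetime of the solution. Note $\Cc=h^{-1}([0,\infty))$ is closed; if $\tau$ were finite and strictly interior to the existence interval, then continuity of $\bx(\cdot)$ plus $\bx(t)\in\Cc$ on $[0,\tau)$ would force $\bx(\tau)\in\Cc\subset\Ec$, and openness of $\Ec$ would let the membership $\bx(s)\in\Ec$ persist for $s$ slightly above $\tau$, contradicting the definition of $\tau$. Thus $\bx(t)\in\Cc$ for all $t\geq 0$, giving forward invariance. The main obstacle is this last neighborhood/stopping-time bookkeeping: one must exploit the fact that the hypothesis is assumed on an \emph{open} neighborhood $\Ec\supset\Cc$ rather than merely on $\Cc$, which is precisely what frees the argument from needing Nagumo-style tangent-cone conditions on $\partial\Cc$ and from requiring Lipschitz closed-loop dynamics.
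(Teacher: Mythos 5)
Your proof is correct, and it is worth noting that the paper does not actually prove Lemma~\ref{lem:BC}: the result is stated without proof and attributed to the reference cited immediately after it, with the remark that assuming the condition on the open set $\Ec$ is what frees the argument from Nagumo's theorem and from Lipschitz dynamics. The closest thing to a proof in the paper is the argument for Proposition~\ref{prop:BC-matrix}, the matrix version of this lemma, and that argument goes by contradiction: it supposes the trajectory exits $\Cc$ at some time $t^*$, extracts a scalar trace $\xi_{\bv^*}(\bx(t))$ that transitions from $0$ to a negative value over an interval contained in $\Ec\setminus\Cc$, and shows the barrier inequality forces $\xi_{\bv^*}$ to be nondecreasing there. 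Your argument instead runs forward: define the stopping time $\tau$, apply the comparison lemma on $[0,\tau)$ to get $h(\bx(t))\geq z(t)\geq 0$, then use closedness of $\Cc$ and openness of $\Ec$ to rule out finite $\tau$. Both rest on the structural point you highlight at the end --- the hypothesis on the open neighborhood keeps the differential inequality in force through any would-be boundary crossing --- but your comparison route additionally records the quantitative bound $h(\bx(t))\geq z(t)$ with $\dot z=-\alpha(z)$, $z(0)=h(\bx_0)$, which the contradiction route does not. One detail you handled carefully and correctly: local Lipschitzness of $\alpha$ together with $\alpha(0)=0$ is precisely what makes $z\equiv 0$ the unique solution from a zero initial condition, so $z$ cannot go negative and the argument closes even in the edge case $h(\bx_0)=0$.
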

We adopt the above version of barrier conditions from \cite{PG-JC-ME:17} because it only requires the system dynamics to be continuous, rather than locally Lipschitz.
This is made possible by requiring the barrier condition to hold over an open neighborhood set $\Ec$, instead of $\Cc$, so that the result does not need to rely on Nagumo's theorem. This distinction is particularly relevant for control systems.
\subsection{Control Barrier Function}
We consider the nonlinear control-affine system:
\begin{equation}\label{sys:ctrl-affine}
    \dot \bx = \bf(\bx)+\bg(\bx)\bu
\end{equation}
where $\bx\in\real^n$ is the system state, $\bu\in\real^m$ is the control input, $\map{\bf}{\real^n}{\real^n}$ is the system drift, and $\map{\bg}{\real^n}{\real^{n\times m}}$ is the control matrix.

For control systems, the control input $\bu$ provides additional flexibility for maintaining safety of the system. Analogous to forward invariance for autonomous system, we rely on the concept of control invariance.
\begin{definition}\longthmtitle{Control Invariance}
    A set $\Cc$ is (forward) \textbf{control invariant} for system \eqref{sys:ctrl-affine} if, for any initial condition $\bx(0)\in\Cc$, the system trajectories $t\rightarrow\bx(t)$ can be maintained inside the set $\Cc$ using some corresponding control input $t\rightarrow\bu(t)$. The set $\Cc$ is \textbf{safe} if it is also a subset of the safety constraint $\Sc$.\hfill$\diamond$
\end{definition}

Building on the concept of barrier conditions, control barrier functions are useful for verifying control invariance of a set.
\begin{definition}\longthmtitle{Control Barrier Functions}\label{def:CBF}
    A continuously differentiable function $\map{h}{\real^n}{\real}$ is called a \textbf{control barrier function} (CBF) for system~\eqref{sys:ctrl-affine} if there exists a locally Lipschitz function $\alpha\in\Ke$ such that, for each $\bx$ in the set $\Cc$ defined in~\eqref{eq:safeset-scalar}, there exists a $\bu\in\real^m$ satisfying:
    \begin{equation}\label{eq:CBC}
        \underbrace{L_\bf h(\bx)+L_\bg h(\bx)\bu}_{\dot h(\bx,\bu)} > -\alpha(h(\bx)).
    \end{equation}
\end{definition}
The idea behind CBFs is that they ensure the feasibility of designing a controller to satisfy~\eqref{eq:BC-scalar}. However, the safety result from Lemma~\ref{lem:BC} additionally requires that the controller be continuous and that the barrier condition be satisfied at all points in some neighborhood $\Ec$ outside of $\Cc$. To this end, the modern definition of a CBF uses a strict inequality, as in~\eqref{eq:CBC}, to facilitate the design of controllers that satisfy these condition.

The safety filter framework is one of the main features of CBFs. With CBFs $\{h_i\}_{i=1}^p$, we may construct an optimization-based controller to simultaneously satisfy all constraints:
\begin{align}\label{eq:CBF-QP}
    \bk(\bx)  =\argmin_{\bu\in\real^m} \quad & \|\bu-\bk_\des(\bx)\|^2                                                                  \\
             \textup{s.t.} \quad & \dot h_i(\bx,\bu) \geq -\alpha(h_i(\bx)),~\forall i\in\until{p} \nonumber
\end{align}
where $\map{\bk_\des}{\real^n}{\real^m}$ is a desired continuous controller, without considering safety. The controller~\eqref{eq:CBF-QP} effectively chooses the control input $\bu$ closest to the desired value $\bk_\des(\bx)$ while respecting the safety constraints given by the CBFs.

If the optimization is feasible\footnote{The compatibility between the multiple CBFs is an active research topic and is beyond the scope of this paper.}, the resulting controller is guaranteed continuous. This stems from the strict inequality in~\eqref{eq:CBC} that makes the optimization satisfy Slater's condition at each $\bx$, see \cite[Prop. 2.19]{RAF-PVK:96} and \cite{PM-AA-JC:25}. In addition, the strictness allows for $\bk$ to be well-defined on a neighborhood $\Ec$ of the intersection of safety constraints. As a result, safety filter works synergistically with CBFs to provide a simple construction of a continuous controller that addresses multiple safety constraints simulteneously.

There is also a nonsmooth version of barrier conditions and CBFs where the differentiability requirement on $h$ can be discarded. The details can be found in \cite{PG-JC-ME:17}. For better exposition, we will discuss them when needed later in the paper, rather than in this background section.

\begin{remark}[Lipschitz controllers]
    Local Lipschitz continuity is a desirable property for a controller. Beyond uniqueness of the closed-loop system solutions, Lipschitzness provides useful bounds and is important for system integration, e.g., cascaded or hierarchical system guarantees. The original conception on CBF~\cite{ADA-SC-ME-GN-KS-PT:19} also requires the controller to be locally Lipschitz. Despite the lack of such a guarantee, researchers have employed safety filters with great success. The state of the art results on Lipschitz guarantees are recently provided in~\cite{PM-AA-JC:25}.~\hfill$\bullet$
\end{remark}

\section{Semidefinite Matrix Constraints}
\label{sec:EMCBF}
We introduce matrix-valued barrier functions that enable a richer representation of safety constraints. Rather than using a scalar-valued function, we construct the safe set via a continuously differentiable matrix-valued function $\map{\bH}{\real^n}{\mathbb S^p}$:
\begin{equation}
    \label{eq:safeset-matrix}
    \Cc = \setdefb{\bx\in\real^n}{\bH(\bx)\succeq 0}
\end{equation}
To handle matrix barrier functions, we first introduce the following notation for the entry-wise Lie derivative matrix $\map{\bL_\bF\bH}{\real^n}{\mathbb S^p}$ along a vector field~$\bF$ as:
$$
    [\bL_\bF\bH]_{ij}(\bx) = L_\bF H_{ij}(\bx), \forall i,j\in\until{p}.
$$
Using this notation, we present a barrier condition that maintains matrix semidefiniteness as follows.
\begin{proposition}\longthmtitle{Exponential Semidefinite Matrix Barrier Condition}
    \label{prop:BC-matrix}
    Consider the autonomous system~\eqref{sys:auto} with a continuous vector field $\bF$ and the set $\Cc$ in~\eqref{eq:safeset-matrix}. If the following barrier condition holds with a positive constant $c_\alpha>0$:
    \begin{equation}
        \label{eq:BC-matrix}
        \bL_\bF\bH(\bx) \succeq -c_\alpha\bH(\bx)
    \end{equation}
    for all $\bx$ in an open neighborhood $\Ec\supset\Cc$, then set $\Cc$ is forward invariant for the system.
\end{proposition}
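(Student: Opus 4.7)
The plan is to mirror the classical scalar barrier-function proof, lifting the usual integrating-factor trick to the Loewner (matrix) order. The key observation is that the hypothesis $\bL_\bF\bH(\bx)\succeq -c_\alpha\bH(\bx)$ becomes a linear matrix differential inequality along trajectories, which can be resolved by a matrix-valued integrating factor together with the characterization $\bA\succeq 0 \iff \bv^\top\bA\bv\geq 0$ for all $\bv\in\real^p$.

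Fixing any trajectory $t\mapsto\bx(t)$ of \eqref{sys:auto} with $\bx(0)\in\Cc$, I would set $\bM(t)\defeq\bH(\bx(t))$. By the chain rule, $\dot\bM(t)=\bL_\bF\bH(\bx(t))$ as long as $\bx(t)\in\Ec$, so the hypothesis reads $\dot\bM(t)+c_\alpha\bM(t)\succeq 0$. Defining $\bN(t)\defeq e^{c_\alpha t}\bM(t)$ yields $\dot\bN(t)=e^{c_\alpha t}(\dot\bM(t)+c_\alpha\bM(t))\succeq 0$. For any fixed $\bv\in\real^p$, the scalar map $t\mapsto \bv^\top\bN(t)\bv$ has nonnegative derivative $\bv^\top\dot\bN(t)\bv$, hence is nondecreasing, so $\bv^\top\bN(t)\bv\geq\bv^\top\bN(0)\bv=\bv^\top\bH(\bx(0))\bv\geq 0$. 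Since $\bv$ is arbitrary, $\bN(t)\succeq 0$, and therefore $\bM(t)=e^{-c_\alpha t}\bN(t)\succeq 0$, i.e., $\bx(t)\in\Cc$.

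The only remaining step is to justify that the trajectory actually stays inside $\Ec$, where the barrier condition holds. This mirrors the argument behind Lemma~\ref{lem:BC}: since the PSD cone is closed and $\bH$ is continuous, $\Cc$ is closed. Let $T^\star\defeq\sup\{t\geq 0\,:\,\bx([0,t])\subset\Cc\}$. If $T^\star$ were finite, continuity of $\bx$ together with openness of $\Ec\supset\Cc$ would furnish $\epsilon>0$ with $\bx([0,T^\star+\epsilon])\subset\Ec$; the preceding integrating-factor argument then forces $\bx(t)\in\Cc$ on all of $[0,T^\star+\epsilon]$, contradicting the definition of $T^\star$. Hence the trajectory remains in $\Cc$ for all $t\geq 0$.

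I expect the main obstacle to be the matrix step rather than the neighborhood step: one must verify that a pointwise Loewner differential inequality can actually be integrated in time to yield a genuine Loewner comparison, and the quadratic-form reduction $\bv^\top(\cdot)\bv$ is what makes this go through without invoking exponentials of non-commuting time-dependent matrices or defining a matrix-valued class-$\Kc$ function. An alternative route is to apply Lemma~\ref{lem:BC} to the family of scalar functions $h_\bv(\bx)\defeq\bv^\top\bH(\bx)\bv$ with $\alpha(r)=c_\alpha r$, since $L_\bF h_\bv(\bx)=\bv^\top\bL_\bF\bH(\bx)\bv\geq -c_\alpha h_\bv(\bx)$; however, the same neighborhood argument above would still be needed, because $\Ec$ is a neighborhood of $\Cc$ and not of each scalar sublevel set $\{h_\bv\geq 0\}$.
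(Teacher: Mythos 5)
Your proof is correct and follows essentially the same approach as the paper: both reduce the Loewner inequality to the family of scalar auxiliary functions $\xi_\bv(\bx)=\bv^\top\bH(\bx)\bv$, use $L_\bF\xi_\bv = \bv^\top\bL_\bF\bH\,\bv \geq -c_\alpha\xi_\bv$, and conclude by a scalar barrier comparison. The paper finishes with a direct contradiction (the differential inequality forces $\xi_{\bv^*}\circ\bx$ to be increasing once it crosses zero, so it cannot become negative) and folds the ``trajectory stays in $\Ec$'' step into that same contradiction, whereas you obtain the bound via the integrating factor $e^{c_\alpha t}$ and isolate the neighborhood step as a supremum-time argument---these are interchangeable technical finishes, and your treatment of the neighborhood step is if anything a bit more explicit than the paper's.
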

\begin{proof}
    For any given vector~$\bv\in\real^p$, we define an auxiliary function:
    \begin{equation}\label{eq:aux}
        \xi_\bv(\bx) = \bv^\top \bH(\bx)\bv
    \end{equation}
    Letting $t\rightarrow \bx(t)$ be any solution from an initial condition $\bx_0\in\Cc$, we have:
    \begin{align*}
        \frac{d}{dt}(\xi_\bv \circ \bx)(t) & = \bv^\top \frac{d}{dt} (\bH \circ \bx)(t) \bv \\
                                           & = \bv^\top \bL_\bF\bH(\bx(t)) \bv              \\
                                           & \geq -c_\alpha\bv^\top (\bH \circ \bx)(t) \bv  \\
                                           & = -c_\alpha (\xi_\bv \circ \bx)(t)
    \end{align*}
    where we have used the matrix inequality~\eqref{eq:BC-matrix} to derive the inequality for all time $t$ such that $\bx(t)\in\Ec$.

    Suppose there exists a solution $\bx(t)$ starting from $\bx(0)=\bx_0\in\Cc$ yet  $\bx(t^*)\not \in \Cc$ at some time $t^*\geq 0$. Then there exists $\bv^*$ such that $\xi_{\bv^*}(\bx(t^*))<0$. By continuity of the solution, there exists a time interval $[t_{\partial\Cc},t_\Ec] \subseteq [t_0,t^*]$ such that $\xi_{\bv^*}(\bx(t_{\partial\Cc}))=0$ and $\xi_{\bv^*}(\bx(t))<0$ for all $t\in(t_{\partial\Cc},t_\Ec]$ which, by the definition of th positive definiteness, indicates $\bx(t)\in\Ec\setminus \Cc, \forall t\in(t_{\partial\Cc},t_\Ec]$. However, the inequality above suggests $\xi_{\bv^*}\circ \bx$ must be increasing during time $t\in(t_{\partial\Cc},t_\Ec)$, which is a contradiction since $\xi_{\bv^*}(\bx(t_\Ec))<0=\xi_{\bv^*}(\bx(t_{\partial\Cc}))$.
\end{proof}
\cref{prop:BC-matrix} establishes a matrix  barrier condition~\eqref{eq:BC-matrix} for the semidefinite matrix safety constraint~\eqref{eq:safeset-matrix}. Analogous to the scalar case, we propose the following definition for matrix control barrier functions.
\begin{definition}\longthmtitle{Exponential MCBF}
    A continuously differentiable function $\map{\bH}{\real^n}{\mathbb S^p}$ is called an \textbf{exponential matrix control barrier function} (exponential MCBF) for system~\eqref{sys:ctrl-affine} if there exists a constant $c_\alpha> 0$ such that, for each $\bx$ in the set $\Cc$ defined in~\eqref{eq:safeset-matrix}, there exists a $\bu\in\real^m$ satisfying:
    \begin{equation}\label{eq:CBC-matrix}
        \underbrace{\bL_\bf\bH(\bx) +\sum_{i=1}^m\bL_{\bg_i}\bH(\bx)u_i}_{\dot \bH(\bx,\bu)} \succ -c_\alpha\bH(\bx).
    \end{equation}
    where $\bg_i$ is the $i$-th column of the control matrix $\bg(\bx)$.~\hfill$\diamond$
\end{definition}
The definition of MCBFs with a strictly positive definite condition helps ensure the existence of a continuous controller, establishing safety of the set $\Cc$.

\begin{theorem}\longthmtitle{Safety from exponential MCBF}
    \label{thm:EMCBF-safety}
    Consider the control-affine system~\eqref{sys:ctrl-affine} and the set  $\Cc$ in~\eqref{eq:safeset-matrix}. If $\bH$ is an exponential MCBF, then set $\Cc$ is control invariant. In particular, the CBF-based semidefinite programming (CBF-SDP) safety filter:
    \begin{align}\label{eq:CBF-SDP}
        \bk(\bx) = \argmin_{\bu\in\real^m} \quad &\|\bu-\bk_\des(\bx)\|^2\hfill\nonumber      \\
                 \textup{s.t.} \quad & \dot \bH(\bx,\bu) \succeq -c_\alpha\bH(\bx)
    \end{align}
    is continuous for all $\bx$ in some neighborhood $\Ec$ of $\Cc$. Consequently, the state-feedback $\bu=\bk(\bx)$ renders set $\Cc$ forward invariant for the closed-loop system.
\end{theorem}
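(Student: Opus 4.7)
The plan is to prove the three assertions in a definite order: first, extend the strict feasibility of the MCBF condition from $\Cc$ to an open neighborhood $\Ec \supset \Cc$; second, establish continuity of the SDP-based safety filter $\bk$ on $\Ec$; and third, invoke \cref{prop:BC-matrix} on the closed-loop vector field $\bF(\bx) = \bf(\bx) + \bg(\bx)\bk(\bx)$ to conclude forward invariance, from which control invariance of $\Cc$ follows as a corollary. For the feasibility extension, I would observe that for any $\bx^* \in \Cc$, the MCBF definition supplies a $\bu^*$ with $\dot\bH(\bx^*, \bu^*) + c_\alpha \bH(\bx^*) \succ 0$; since $\bH$, $\bf$, and each $\bg_i$ are continuous and the positive definite cone is open in $\mathbb S^p$, the same $\bu^*$ remains strictly feasible on some open neighborhood of $\bx^*$. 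Taking the union of these neighborhoods over $\bx^* \in \Cc$ gives the desired open $\Ec$.

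The main obstacle is continuity of $\bk$ on $\Ec$. Write the feasible-set correspondence as $U(\bx) = \{\bu \in \real^m : \dot\bH(\bx, \bu) + c_\alpha \bH(\bx) \succeq 0\}$, which is a closed convex spectrahedron, affine in $\bu$ for each fixed $\bx$. I would show that $U$ is continuous as a set-valued map in the Berge sense. Outer semicontinuity follows from joint continuity of the constraint in $(\bx,\bu)$ together with closedness of the positive semidefinite cone. Inner semicontinuity is the delicate step and is where strict feasibility is crucial: given $\bx_k \to \bx^*$ in $\Ec$ and any $\bar\bu \in U(\bx^*)$, I would build $\bu_k \in U(\bx_k)$ with $\bu_k \to \bar\bu$ via convex combinations $\bu_k = (1-\lambda_k)\bar\bu + \lambda_k \bu^*$, where $\bu^*$ is the strictly feasible point from the first step. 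Because the constraint is affine in $\bu$, feasibility is preserved by convex combinations; because the strict slack $\dot\bH(\cdot,\bu^*) + c_\alpha \bH(\cdot) \succ 0$ persists in a neighborhood of $\bx^*$, one can choose $\lambda_k \to 0$ at a rate that overcomes the perturbation of $\dot\bH(\bx_k,\bar\bu) + c_\alpha \bH(\bx_k)$ away from $\dot\bH(\bx^*,\bar\bu) + c_\alpha \bH(\bx^*)$. With $U$ continuous and the objective $\|\bu - \bk_\des(\bx)\|^2$ jointly continuous, strictly convex, and coercive in $\bu$, the argmin is unique; after a standard local truncation of $U(\bx)$ by a large ball around $\bk_\des(\bx)$ (which does not affect the minimizer), Berge's maximum theorem yields continuity of $\bk$ on $\Ec$.

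The closing step is then short. The closed-loop field $\bF(\bx) = \bf(\bx) + \bg(\bx)\bk(\bx)$ is continuous on $\Ec$, and by construction of the filter it satisfies $\bL_\bF \bH(\bx) = \dot\bH(\bx,\bk(\bx)) \succeq -c_\alpha \bH(\bx)$ for every $\bx \in \Ec$. These are exactly the hypotheses of \cref{prop:BC-matrix}, which yields forward invariance of $\Cc$ for the closed-loop system; control invariance of $\Cc$ for the open-loop system follows immediately by exhibiting the feedback $\bk$ as the required admissible input. The genuinely hard part is the inner-semicontinuity argument for $U$: it is the SDP analog of the classical Slater-based continuity result for parametric convex programs, and the careful balancing of the decay rate of $\lambda_k$ against the perturbation rate from $\bx_k \to \bx^*$ is what makes the strictness in the MCBF definition indispensable.
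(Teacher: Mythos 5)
Your proof is correct and reaches the same end, but it substitutes direct constructions for the two citations the paper leans on, so it is worth noting where the routes diverge. For lower (inner) semicontinuity of the feasible-set correspondence, the paper first rewrites the SDP constraint as a scalar inequality $\phi_1\big(\dot\bH(\bx,\bu)+c_\alpha\bH(\bx)\big)\geq 0$ using the globally Lipschitz minimum-eigenvalue function, and then cites a lemma stating that a convex-valued correspondence with nonempty interior defined by a continuous scalar inequality is lower semicontinuous. You instead give the classical Slater-style argument from scratch: perturb a point $\bar\bu\in U(\bx^*)$ toward a uniformly strictly feasible $\bu^*$ by a convex combination and tune the mixing rate $\lambda_k$ to dominate the size of the $\bx$-perturbation---this is exactly what the cited lemma encapsulates, so your rate-balancing step is sound but longer, and it makes the role of the strict slack (a definite matrix $\succeq \mu\bI$) explicit. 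For continuity of the argmin itself, the paper changes variables $\Delta\bu = \bu - \bk_\des(\bx)$ to cast the problem as a minimal selection of a lower semicontinuous, closed- and convex-valued map and cites the corresponding continuity result; you instead truncate the feasible set to a compact ball that contains the minimizer (valid by coercivity of the objective) and apply Berge's maximum theorem plus strict convexity for uniqueness. Both are standard and equivalent in this setting; the paper's version is shorter and delegates to existing parametric-optimization machinery, while yours is more self-contained. Your final step---applying the matrix barrier condition of Proposition~\ref{prop:BC-matrix} to the continuous closed-loop vector field on $\Ec$, then deducing control invariance from the existence of the feedback---matches the paper exactly. No gaps.
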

\begin{proof}
    Define the set-valued map:
    \begin{align*}
        \Uc(\bx) & = \setdefb{\bu\in\real^m}{\dot\bH(\bx,\bu) +c_\alpha\bH(\bx) \succeq \bzero}        \\
                 & =\setdefb{\bu\in\real^m}{\phi_1\big(\dot\bH(\bx,\bu) +c_\alpha\bH(\bx)\big)\geq 0}.
    \end{align*}
    where $\phi_1$ is the function returning the minimum eigenvalue of a given matrix. Since the matrices are entry-wise continuously differentiable and the eigenvalue function~$\phi_1$ is globally Lipschitz (see e.g., \cite[Thm. 1]{MDS-JC:09} and \cite[Thm. 2.4]{ASL:96}), the resulting function $\phi_1\circ(\dot \bH+c_\alpha\bH)$ is continuous. As such, not only does the map have a nonempty interior for $\bx\in\Cc$ from the definition of MCBF, but it also has a nonempty interior on some neighborhood $\Ec\supset \Cc$ from continuity. Hence, we conclude the map $\Uc$ is lower semicontinuous on $\Ec$ because it has a nonempty interior, is convex-valued, and the function $\phi_1\circ(\dot \bH+c_\alpha\bH)$ defining it is continuous, see \cite[Lem. 5.2]{GS:18}. After substituting $\Delta \bu = \bu-\bk_\des(\bx)$ as a new variable, we have a minimal selection formulation:
        \begin{align*}
            \Delta\bk(\bx) = \argmin_{\bu\in\real^m} \quad &\|\Delta\bu\|^2\hfill\nonumber                                          \\
                          \textup{s.t.} \quad & \phi_1(\dot \bH\big(\bx,\Delta\bu+\bk_\des(\bx))+c_\alpha\bH(\bx)\big) \geq 0.
        \end{align*}
    Here, the constraint remains lower semicontinuous, convex-valued, and closed-valued.  Therefore, from \cite[Prop. 2.19]{RAF-PVK:96}, its minimal selection on a Euclidean space is continuous for all $\bx\in\Ec$, implying $\bk(\bx)=\Delta\bk(\bx)+\bk_\des(\bx)$ is also continuous, as desired.

    Forward invariance of the closed-loop system $\bF(\bx) = \bf(\bx)+\bg(\bx)\bk(\bx)$ follows from Lemma~\ref{prop:BC-matrix}, and control invariance follows from the existence of the control signal $t\rightarrow\bu(t)=\bk(\bx(t))$, concluding the proof.
\end{proof}
Theorem~\ref{thm:EMCBF-safety} formally states the safety result established by an exponential MCBF, proposing specifically the CBF-SDP safety filter~\eqref{eq:CBF-SDP} as one of the possible controllers. Beyond safety guarantees, CBFs typically offer useful bounds on the evolution of safety. To complement our results, we next provide  the bounds associated with MCBF-based controllers.

\subsection{Spectral Analysis}
For the standard CBF condition~\eqref{eq:BC-scalar}, we can derive an exponential bound on the evolution of $h$ along the trajectory. This occurs when $\alpha(h(\bx))= c_\alpha h(\bx)$ is chosen as a linear function with $c_\alpha >0$. That is, we have $h(\bx(t))\geq h(\bx_0)e^{-c_\alpha t}$ from an initial condition $\bx_0\in\real^n$. Incidentally, we also refer to our MCBF in~\eqref{eq:CBC-matrix} as exponential, despite $\bH$ not being scalar-valued. This is motivated by how the evolution of its eigenvalues can be similarly exponentially bounded.

For each $\bx\in\real^n$, $\bH(\bx)$ is a symmetric matrix and thus has real eigenvalues. Formally, we use $\map{\phi_j}{\mathbb S^p}{\real}$ to denote the function mapping a symmetric matrix $\bH$ to its $j$-th smallest eigenvalue. At the same time, we use $\map{\lambda_j}{\real^n}{\real}$ for the state-dependent eigenvalue function through a function composition: $\lambda_j(\bx)=\phi_j(\bH(\bx))$. Without loss of generality, we define $\{\lambda_j\}_{j=1}^p$ so that they are in an ascending order:
\begin{equation}
    \lambda_1(\bx)\leq \dots \leq \lambda_p(\bx),~\forall \bx\in\real^n.
\end{equation}
We note importantly that the function $\{\lambda_j\}_{j=1}^p$ is not differentiable at points where the eigenvalue  is not simple. Therefore, we rely on nonsmooth analysis to bound its rate of change along an autonomous system~\eqref{sys:auto} as:
$$
    \frac{d}{dt}(\lambda_j\circ \bx)(t) \in \partial \phi_j (\bH(\bx(t))) \cdot \bL_\bF\bH(\bx(t)),  ~a.e.~t\geq0,
$$
with its weak set-valued Lie derivative~\cite[Lem. 1 and Rmk. 1]{PG-JC-ME:17}, constructed by the generalized gradient set $\partial\phi_j$. Note that the left hand side is also guaranteed to be absolutely continuous. For the right hand side, we have used the nonsmooth chain rule~\cite[Thm. 2.3.10]{FHC:83} for the weak set-valued Lie derivative since we know the generalized gradient of an eigenvalue of a symmetric matrix function, cf.~\cite{MDS-JC:09}:
$$
    \partial\phi_j(\bH) = \text{co}\setdefb{\bv\bv^\top \in \mathbb{S}^p}{\bH\bv=\phi_j(\bH)\bv,~\|\bv\| = 1}.
$$
Since the vectors $\bv$'s defining in the set above are essentially the normalized eigenvectors associated with $\phi_j$, we have the following result. 

\begin{proposition}\longthmtitle{Exponential Bound on Eigenvalues}\label{prop:exp_bound}
    Consider the autonomous system~\eqref{sys:auto} with continuous dynamics $\bF$ and the set $\Cc$ in~\eqref{eq:safeset-matrix}. If the matrix barrier condition~\eqref{eq:BC-matrix} holds, then the eigenvalues can be bounded as:
    \begin{equation}\label{eq:exp_bound}
        \lambda_i(\bx(t)) \geq \lambda_i(\bx_0)e^{-c_\alpha t},~\forall i\in\until p,
    \end{equation}
    at almost every time $t\geq0$,  along any Carath\'eodory solution starting from $\bx_0\in\Cc$.
\end{proposition}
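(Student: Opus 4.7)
The plan is to combine the weak set-valued Lie derivative expression for $\lambda_j \circ \bx$ stated right before the proposition with the explicit form of the generalized gradient $\partial\phi_j(\bH)$, then show that every element of the resulting set is bounded below by $-c_\alpha\,\lambda_j(\bx)$ so that a standard comparison (Gronwall-type) argument yields the exponential bound. Throughout I will rely on the fact, already noted in the excerpt, that $\lambda_j \circ \bx$ is absolutely continuous along any Carath\'eodory solution, so an almost-everywhere differential inequality can be integrated.

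\emph{Step 1: Reduce to rank-one generators.} Fix $j \in \until{p}$ and a time $t$ at which $(\lambda_j\circ \bx)'(t)$ exists. Any element $\bM \in \partial\phi_j(\bH(\bx(t)))$ can be written as a convex combination $\bM = \sum_k \beta_k\, \bv_k \bv_k^\top$, where each $\bv_k$ is a unit eigenvector of $\bH(\bx(t))$ associated with the eigenvalue $\lambda_j(\bx(t))$. Using the Frobenius product identity $\bv\bv^\top \cdot \bA = \bv^\top \bA \bv$ and the barrier inequality~\eqref{eq:BC-matrix}, for each $k$,
\begin{align*}
    \bv_k \bv_k^\top \cdot \bL_\bF \bH(\bx(t))
    &= \bv_k^\top \bL_\bF \bH(\bx(t))\, \bv_k \\
    &\geq -c_\alpha\, \bv_k^\top \bH(\bx(t))\, \bv_k \\
    &= -c_\alpha\, \lambda_j(\bx(t)),
\end{align*}
since $\bv_k$ is a unit eigenvector for the eigenvalue $\lambda_j(\bx(t))$. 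Taking the convex combination preserves the inequality, so $\bM \cdot \bL_\bF \bH(\bx(t)) \geq -c_\alpha\, \lambda_j(\bx(t))$ for every $\bM \in \partial\phi_j(\bH(\bx(t)))$.

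\emph{Step 2: Differential inequality and Gronwall.} Since the weak set-valued Lie derivative contains $(\lambda_j\circ \bx)'(t)$ for almost every $t \geq 0$, Step~1 yields the scalar almost-everywhere inequality
$$
    \frac{d}{dt}\,(\lambda_j\circ \bx)(t) \;\geq\; -c_\alpha\, \lambda_j(\bx(t)).
$$
Multiplying by the integrating factor $e^{c_\alpha t}$, the function $t \mapsto e^{c_\alpha t} \lambda_j(\bx(t))$ is absolutely continuous with nonnegative weak derivative, hence nondecreasing. Evaluating at $t=0$ and at a general $t\geq 0$ yields the claimed bound $\lambda_j(\bx(t)) \geq \lambda_j(\bx_0)\, e^{-c_\alpha t}$ for all $i \in \until{p}$.

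\emph{Main obstacle.} The substantive point is not the Gronwall step but justifying that the set $\partial\phi_j(\bH)\cdot \bL_\bF \bH$ uniformly satisfies the scalar lower bound, which is where the rank-one structure of the generalized eigenvalue gradient and the Frobenius identity $\bv\bv^\top \cdot \bA = \bv^\top \bA \bv$ combine cleanly with the semidefinite inequality~\eqref{eq:BC-matrix}; once this is in hand the conclusion is immediate. One must also be mindful that the bound is asserted only along Carath\'eodory solutions that remain in $\Ec$, where~\eqref{eq:BC-matrix} is assumed to hold, but since \cref{prop:BC-matrix} already guarantees forward invariance of $\Cc \subset \Ec$ from $\bx_0 \in \Cc$, this causes no difficulty.
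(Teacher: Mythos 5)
Your proposal is correct and follows essentially the same route as the paper's proof: express an element of the generalized gradient $\partial\phi_j(\bH)$ as a convex combination of rank-one projectors $\bv_k\bv_k^\top$ built from normalized eigenvectors, apply the Frobenius identity and the matrix barrier inequality~\eqref{eq:BC-matrix} to each rank-one term, pass the bound through the convex combination, and then integrate the resulting a.e. differential inequality. The only cosmetic differences are that the paper explicitly invokes Carath\'eodory's theorem to obtain a finite convex combination of size $(p^2+p)/2$ and appeals to ``the comparison lemma,'' while you argue that every element of the set-valued Lie derivative satisfies the bound (a slightly cleaner logical phrasing) and carry out the Gronwall step with an explicit integrating factor.
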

\begin{proof}
    Let $\bA(t)\in\partial\phi_j(\bH(\bx(t))$ be such that
    $$
        \frac{d}{dt}(\lambda_j\circ \bx)(t) = \bA(t)\cdot \dot \bH(\bx(t),\bk(\bx(t)))
    $$
    at almost every time $t\geq 0$. Since $\bA$ is an element in the convex hull, there may not be an eigenvector $\bv$ such that $\bA=\bv\bv^\top$. Therefore, we use the Carath\'eodory theorem of convex hulls~\cite[Thm. 17.1]{RTR:70} to deduce the existence of  vectors $\{\bv_s(t)\}_{s=1}^{(p^2+p)/2}$ such that
    $$
        \bA(t) = \sum_{s=1}^{(p^2+p)/2} a_s(t) \bv_s(t)\bv_s(t)^\top
    $$
    where $\sum a_s(t) = 1$ and each $\bv_s$ is a normalized eigenvector associated with $\phi_j(\bH(\bx))$. Using this fact, we bound:
    \begin{align*}
        \bA\cdot \bL_\bF\bH(\bx) & = \Big(\sum_{s=1}^{(p^2+p)/2} a_s \bv_s\bv_s^\top\Big) \cdot \bL_\bF\bH(\bx) \\
                                 & = \sum_{s=1}^{(p^2+p)/2} a_s \big(\bv_s\bv_s^\top \cdot \bL_\bF\bH(\bx)\big) \\
                                 & = \sum_{s=1}^{(p^2+p)/2} a_s \big(\bv_s^\top \bL_\bF\bH(\bx)\bv_s\big)       \\
                                 & \geq  -c_\alpha\sum_{s=1}^{(p^2+p)/2} a_s \big(\bv_s^\top \bH(\bx)\bv_s\big) \\
                                 & = -c_\alpha\sum_{s=1}^{(p^2+p)/2} a_s \lambda_j(\bx)                   = -c_\alpha \lambda_j(\bx)
    \end{align*}
    where we have dropped the dependencies on $t$ for compactness of the presentation. Hence, the comparison lemma ensures the evolution bound~\eqref{eq:exp_bound}, concluding the proof.
\end{proof}
Proposition~\ref{prop:exp_bound} shows that the proposed matrix barrier condition~\eqref{eq:BC-matrix} enforces bounds on all eigenvalues of $\bH$. The bounds suggest a degree of conservatism in the approach, since ensuring only the smallest eigenvalue $\lambda_1$ nonnegative is sufficient (and necessary) to keep $\bH$ positive semidefinite. Nevertheless, it is typically difficult to handle the nonsmooth nature associated with eigenvalues, and our approach offers a simple way to address it. We further expand on this point by focusing our discussion on diagonal matrices.

\subsection{Diagonal Matrix Constraints}
Diagonal matrices are special cases of  symmetric matrices~$\bH$. When $\bH$ is diagonal, the safe set $\Cc$ in~\eqref{eq:safeset-matrix} can be equivalently represented as:
$$
    \Cc = \setdefb{\bx\in\real^n}{\bH_{ii}(\bx) \geq 0, \forall i\in\until{p}}.
$$
Here, each $\bH_{ii}$ can be viewed as an individual scalar-valued CBF because the matrix barrier condition~\eqref{eq:CBC-matrix} is equivalent to
$$
    \dot \bH_{ii}(\bx,\bu) \geq -c_\alpha\bH_{ii}(\bx),~\forall i\in\until p,
$$
which is precisely what we would have when considering multiple exponential CBFs with linear $\alpha$  simultaneously. The following corollary makes this observation formal.
\begin{corollary}\longthmtitle{Multiple CBFs via MCBF}\label{cor:diag}
    Consider the control-affine system~\eqref{sys:ctrl-affine}. Given a set of control barrier functions $\{h_i\}_{i=1}^p$ with $\alpha_i(r)=c_\alpha r$ for all $i\in\until p$, let a diagonal matrix $\bH$ be constructed with $\bH_{ii}(\bx)=h_i(\bx)$. Then $\bH$ is an exponential MCBF, and the safety filters \eqref{eq:CBF-QP} and \eqref{eq:CBF-SDP} are equivalent and continuous on $\Cc= \bigcap_{i=1}^p \Cc_i$ defined in~\eqref{eq:safeset-matrix}. As a result, the state-feedback $\bu=\bk(\bx)$ renders set $\Cc$  forward invariant for the closed-loop system.~\hfill$\blacksquare$
\end{corollary}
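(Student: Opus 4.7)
The plan is to exploit the diagonal structure of $\bH$ to reduce every matrix-valued condition in the MCBF framework to an equivalent list of scalar conditions, and then simply invoke \cref{thm:EMCBF-safety}. The pivotal elementary fact is that a symmetric diagonal matrix is positive (semi)definite iff each diagonal entry is positive (nonnegative), so the test $\succeq 0$ (resp.\ $\succ 0$) splits entrywise into $p$ scalar inequalities.

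First I would verify the set identity. Since $[\bH(\bx)]_{ii}=h_i(\bx)$ and $\bH$ is diagonal, $\bH(\bx)\succeq 0$ is equivalent to $h_i(\bx)\ge 0$ for every $i\in\until p$, so the safe set $\Cc$ of \eqref{eq:safeset-matrix} equals $\bigcap_{i=1}^p \Cc_i$. Next, because $\bH$ is entrywise smooth and diagonal, $\bL_\bf\bH$ and every $\bL_{\bg_j}\bH$ are diagonal as well, giving $[\dot\bH(\bx,\bu)]_{ii}=\dot h_i(\bx,\bu)$. Applying the diagonal definiteness test to $\dot\bH(\bx,\bu)+c_\alpha\bH(\bx)$ then yields
\[
    \dot\bH(\bx,\bu)\succ -c_\alpha\bH(\bx) \ \Longleftrightarrow\ \dot h_i(\bx,\bu) > -c_\alpha h_i(\bx),\ \forall i\in\until p,
\]
and the analogous non-strict equivalence.

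With these equivalences in hand, showing $\bH$ is an exponential MCBF reduces to exhibiting, at each $\bx\in\Cc$, a single $\bu$ that strictly satisfies all $p$ scalar barrier inequalities simultaneously; this is exactly the (strict) feasibility of \eqref{eq:CBF-QP} that already underlies treating the $\{h_i\}$ as a usable collection of CBFs. The equivalence of \eqref{eq:CBF-QP} and \eqref{eq:CBF-SDP} then drops out for free: both minimize the same strictly convex objective $\|\bu-\bk_\des(\bx)\|^2$ over identical feasible sets at each $\bx$, so their unique minimizers coincide. Continuity of $\bk$ on a neighborhood $\Ec\supset\Cc$ and forward invariance of $\Cc$ under $\bu=\bk(\bx)$ then follow by applying \cref{thm:EMCBF-safety} to $\bH$.

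The main delicacy, rather than a genuine obstacle, is the compatibility point: the definition of CBF applied to each $h_i$ individually only guarantees a $\bu_i$ per constraint, whereas the MCBF property needs a single $\bu$ meeting all strict inequalities at once. The corollary is implicitly conditioned on this joint compatibility, which the paper's footnote explicitly flags as outside its scope; once it is assumed, the rest of the argument is a direct translation between diagonal matrix inequalities and vectors of scalar ones.
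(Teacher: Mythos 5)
Your proposal is correct and follows the same route as the paper, which leaves the corollary without a formal proof block and instead relies on the observation preceding it: a diagonal $\bH\succeq 0$ splits entrywise, the Lie derivatives preserve the diagonal structure, and the matrix barrier inequality $\dot\bH(\bx,\bu)\succeq -c_\alpha\bH(\bx)$ is therefore equivalent to the vector of scalar conditions $\dot h_i(\bx,\bu)\geq -c_\alpha h_i(\bx)$, from which both the identity of feasible sets (hence of the QP and SDP minimizers under the same strictly convex cost) and the hypotheses of \cref{thm:EMCBF-safety} fall out.

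Your remark about joint compatibility is well placed and not merely a pedantic footnote. Each $h_i$ being a CBF in the sense of Definition~3 yields a possibly different $\bu_i$ per constraint, whereas the exponential MCBF property demands a single $\bu$ making $\dot\bH(\bx,\bu)+c_\alpha\bH(\bx)$ strictly positive definite, i.e., all $p$ strict scalar inequalities simultaneously. The corollary's ``Then $\bH$ is an exponential MCBF'' is thus not a logical consequence of the stated hypotheses alone; it is conditioned on exactly the compatibility that the paper relegates to the footnote attached to \eqref{eq:CBF-QP}. Making that assumption explicit, as you do, is the one place your write-up is cleaner than the paper's presentation, and everything else is a direct translation between diagonal LMIs and stacked scalar CBF constraints.
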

Corollary~\ref{cor:diag} formalizes the common strategy of using multiple CBFs to simultaneously handle multiple safety constraints. While it is generally understood that all barrier conditions should all hold together, a frequently overlooked detail is that they only need to do so on a neighborhood of the intersection~$\bigcap_{i=1}^p \Cc_i$. It is unnecessary to verify that each CBF hold over its entire individual safe set~$\Cc_i$. Although this nuance is familiar to many practitioners, it is rarely stated formally. Our diagonal MCBF formulation makes this composition process explicit and provides a formal justification for this widely used strategy.

Another related strategy for simultaneously addressing multiple safety constraints is through Boolean-based CBFs~\cite{PG-JC-ME:17}, which also connects to our diagonal matrix formulation. In particular, the smallest eigenvalue of a diagonal matrix $\bH$ is:
$$
    \lambda_1(\bx) = \min_{i\in\until{p}} \bH_{ii}(\bx) \defeq h_{\min}(\bx).
$$
Instead of considering all CBFs at the same time, we may choose to only focus on the one with the smallest value. This particular approach corresponds to the conjunctive (AND) Boolean composition discussed in ~\cite{PG-JC-ME:17}. To address the nonsmoothness introduced when the index of the smallest CBF switches, \cite{PG-JC-ME:17} proposes a barrier condition that essentially considers the worst-case $\frac{d}{dt}h_{\min}(\bx(t))$. However, such formulation does not yield continuity in the optimization-based controller, see \cite{PO-BC-LS-JC:23-auto} for an example.

Our MCBF formulation offers one approach to design continuous controllers for Boolean-based CBFs~\cite{PG-JC-ME:17} without relaxing the safe set using soft-min or soft-max approximation like in~\cite{TGM-ADA:23}. Undeniably, as suggested in Corollary~\ref{cor:diag}, our proposed formulation recovers the safety filter~\eqref{eq:CBF-QP} we typically use for dealing with multiple CBFs. However, as we shall discuss in later section, the novelty of our approach manifests when we consider disjunctive (OR) Boolean compositions.

\subsection{Lipschitz Controllers}
This paper does not establish Lipschitz guarantees for the CBF-SDP~\eqref{eq:CBF-SDP}. Regularity of an optimization-based controller is an active field of research. For instance, it was only recently rediscovered that a CBF-QP with multiple CBF inequalities~\eqref{eq:CBF-QP} is locally Lipschitz under the linear independence constraint qualification (LICQ) assumption~\cite{PM-AA-JC:25}. At the same time, there is little work on the regularity of more complex optimization-based controller, such as second-order cone programming (SOCP) formulations for measurement-robust CBF~\cite{RKC-AWS-AJT-TGM-KLB-ADA:21}. Given the inherent difficulty of identifying conditions that guarantee Lipschitz continuity---even for the diagonal case---we decide to leave it as an open problem and a part of our future investigations on MCBFs. Instead, we provide here a result establishing the existence of a smooth (though not necessarily optimization-based) controller associated with a MCBF.

\begin{proposition}
    \label{prop:artstein}
    Consider the control-affine system~\eqref{sys:ctrl-affine} and the set $\Cc$ in~\eqref{eq:safeset-matrix}. If $\bH$ is an exponential MCBF, then there exists a smooth controller $\map{\bk}{\real^n}{\real^m}$ such that $\bu=\bk(\bx)$ satisfies~\eqref{eq:CBC-matrix} for all $\bx$ in an open neighborhood $\Ec\supset \Cc$. Consequently, set $\Cc$ is control invariant for the system.
\end{proposition}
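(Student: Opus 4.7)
The plan is to use an Artstein-style partition-of-unity construction, which is made possible because the strict matrix inequality in the definition of an exponential MCBF yields an admissible-control set-valued map that is open, convex-valued, and has nonempty values on a neighborhood of $\Cc$.

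First, I would introduce the admissible-control map
$$\Uc(\bx) = \setdefb{\bu\in\real^m}{\dot \bH(\bx,\bu) + c_\alpha \bH(\bx) \succ \bzero}.$$
By the MCBF hypothesis, $\Uc(\bx)\neq\emptyset$ for every $\bx\in\Cc$. Since $\dot\bH(\bx,\bu)$ is affine in $\bu$ and continuous in $\bx$, and the strict positive-definite cone is open in $\mathbb S^p$, the map $\Uc$ remains nonempty, open, and convex-valued on an open neighborhood $\Ec\supset\Cc$. For each $\bx_\star\in\Ec$, I would fix a witness $\bu_\star\in\Uc(\bx_\star)$; by continuity of $\bx\mapsto\dot\bH(\bx,\bu_\star)+c_\alpha\bH(\bx)$, the constant choice $\bu_\star$ lies in $\Uc(\bx)$ for all $\bx$ in some open ball $B_{\bx_\star}\ni\bx_\star$, and the collection $\{B_{\bx_\star}\}_{\bx_\star\in\Ec}$ is an open cover of $\Ec$.

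Next, I would invoke paracompactness of $\Ec$ as an open subset of $\real^n$ to extract a locally finite smooth partition of unity $\{\rho_i\}$ subordinate to a refinement $\{B_{\bx_i}\}$ of this cover, each carrying a constant witness $\bu_i\in\Uc(\bx_i)$. The candidate controller is
$$\bk(\bx) = \sum_i \rho_i(\bx)\,\bu_i,$$
which is smooth by construction and locally a finite sum. At any $\bx\in\Ec$, this is a convex combination of those $\bu_i$ with $\rho_i(\bx)>0$; each such index satisfies $\bx\in B_{\bx_i}$, hence $\bu_i\in\Uc(\bx)$. Because $\dot\bH(\bx,\cdot)$ is affine and $\{\bM\in\mathbb S^p:\bM\succ\bzero\}$ is convex, the convex combination lies in $\Uc(\bx)$ as well, so $\bk$ satisfies the strict barrier condition~\eqref{eq:CBC-matrix} pointwise on $\Ec$. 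Control invariance of $\Cc$ then follows by applying \cref{prop:BC-matrix} to the closed-loop vector field $\bF(\bx)=\bf(\bx)+\bg(\bx)\bk(\bx)$, which inherits the non-strict matrix barrier condition on~$\Ec$.

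The main step requiring care is that the pointwise convex combination actually preserves the \emph{strict} matrix inequality; this is precisely where the convexity of the strict positive-definite cone and the $\bu$-affineness of $\dot\bH$ become essential, and it is also why the strict form in the definition of an exponential MCBF cannot be relaxed to a non-strict one. The partition-of-unity step itself is standard given paracompactness of open subsets of $\real^n$, so I expect no real difficulty there.
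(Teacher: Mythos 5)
Your proposal follows essentially the same route as the paper's proof: both fix a (possibly discontinuous) pointwise witness control, use continuity of $\dot\bH(\bx,\bu)+c_\alpha\bH(\bx)$ in $\bx$ to obtain local neighborhoods on which that witness remains strictly feasible, and then glue the witnesses together with a smooth partition of unity, relying on convexity of the strict LMI cone (and $\bu$-affineness of $\dot\bH$) to keep the convex combination feasible. The paper compresses this into a citation of Artstein's theorem and a prior lemma, whereas you unpack the partition-of-unity argument explicitly, but the underlying argument is the same.
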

\begin{proof}
    From the definition of an exponential CBF, there exists a function $\bx\rightarrow\bu^*(\bx)$, not necessarily continuous, such that $\dot\bH(\bx,\bu^*(\bx)) \succ -c_\alpha\bH(\bx)$. Then because the function $\dot\bH(\bx,\bu)+ c_\alpha\bH(\bx)$ is continuous in $\bx$, there exists a neighborhood $\Wc(\bx)$, for each $\bx$, such that $\dot\bH(\bx',\bu^*(\bx))+ c_\alpha\bH(\bx')$ remains positive definite for all $\bx'\in\Wc(\bx)$. Since the $\bu^*(\bx)$ is a feasible control in its corresponding neighborhood $\Wc(\bx)$, we can invoke the arguments of Artstein's theorem~\cite[Thm. 4.1]{ZA:83} (see also~\cite[Lem. 6.5]{PO-BC-LS-JC:23-auto}) to deduce the existence of a smooth controller through a finite partition of unity, relying on the fact that a LMI is a convex constraint.
\end{proof}
Proposition~\ref{prop:artstein} relies on the Artstein's theorem to non-constructively establish the existence of a smooth controller. This result shows that smooth controllers are theoretically achievable. While challenging, identifying conditions guaranteeing Lipschitz continuity for optimization-based controllers is a valuable and promising area for future research.

\section{Indefinite Matrix Constraints}
Often times, safety requirements involve maintaining state trajectories outside a given set, e.g., obstacle avoidance, rather than inside the set. For scalar-valued CBFs, there is no distinction between the two scenarios because CBFs can be trivially reformulated with a negation. However, a negation to a matrix requires extra care.

\subsection{Disjunctive Boolean on CBFs}
We begin with a motivating example. Consider a cylinder described by the following three inequalities:
\begin{align*}
    x_1^2+x_2^2 \leq 1 ~\text{and}~ -1 \leq x_3\leq 1.
\end{align*}
To keep state trajectories within this cylinder, we can rely on the positive semidefinite constraint formulation developed in this paper. On the other hand, when this cylinder represents an obstacle, we would like the state trajectories to remain outside of it. This leads to a safety constraint given by the logical negation of the cylinder representation:
\begin{align*}
    x_1^2+x_2^2 \geq 1 ~\text{or}~ x_3\leq -1~\text{or}~ 1\leq x_3,
\end{align*}
introducing OR Booleans among the inequalities. To tackle this problem using the standard CBF approach, the common practice is to smoothly approximate the safe set with a soft-max function on all three constraints, see~\cite{TGM-ADA:23}.

From the new perspective using a matrix representation, we may also represent the negation of a (open) cylinder with an indefinite inequality:
$$
    -\begin{bmatrix}
        1-x_1 & x_2   & 0     & 0     \\
        x_2   & 1+x_1 & 0     & 0     \\
        0     & 0     & 1-x_3 & 0     \\
        0     & 0     & 0     & 1+x_3
    \end{bmatrix} \not \prec~ \bzero.
$$ Motivated by this class of problems, we develop a MCBF framework to deal with such constraints.
\subsection{Indefinite CBFs}
In this section, we consider safe sets given by:
\begin{align}\label{eq:safeset-indefinite}
    \Cc & = \setdefb{\bx\in\real^n}{\bH(\bx)\not\prec \bzero} \nonumber \\
        & = \setdefb{\bx\in\real^n}{\lambda_p(\bx)\geq 0}
\end{align}
The equivalent eigenvalue formulation facilitates our subsequent result for indefinite matrices.
\begin{proposition}\longthmtitle{Indefinite Matrix Barrier Condition}
    \label{prop:BC-indefinite}
    Consider the autonomous system~\eqref{sys:auto} with a continuous vector field $\bF$  and the set $\Cc$ in~\eqref{eq:safeset-indefinite}. If there exists a function $\alpha\in\Ke$ and $c_\perp \geq 0$ such that:
    \begin{equation}
        \label{eq:BC-indefinite}
        \bL_\bF\bH(\bx) \succeq -\alpha\big(
        \lambda_p(\bx)
        \big)\bI_{p\times p} -c_\perp\big(
        \lambda_p(\bx)\bI_{p\times p}-\bH(\bx)
        \big),
    \end{equation}
    for all $\bx$ in an open neighborhood $\Ec\supset\Cc$, then the following bound holds:
    \begin{equation}\label{eq:indefinite-bound-largest}
        \frac{d}{dt}(\lambda_p\circ \bx)(t) \geq -\alpha\big(
        (\lambda_p\circ\bx)(t)
        \big)
    \end{equation}
    at almost every time $t\geq0$ along any Carath\'eodory solution starting from $\bx_0\in\Cc$. As a consequence, set $\Cc$ is forward invariant for the system.
\end{proposition}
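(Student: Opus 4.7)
The plan is to mirror the eigenvalue-evolution argument used in the proof of Proposition~\ref{prop:exp_bound}, but now applied only to the largest eigenvalue $\lambda_p$, and to exploit the specific structure of the $c_\perp$ correction in~\eqref{eq:BC-indefinite}. Since $\lambda_p = \phi_p\circ\bH$ is locally Lipschitz and $\bH$ is continuously differentiable, the nonsmooth chain rule together with the generalized gradient formula
\[
\partial \phi_p(\bH) = \text{co}\setdefb{\bv\bv^\top \in \mathbb S^p}{\bH\bv = \phi_p(\bH)\bv,~\|\bv\| = 1}
\]
yields $\frac{d}{dt}(\lambda_p\circ\bx)(t) \in \partial \phi_p(\bH(\bx(t))) \cdot \bL_\bF\bH(\bx(t))$ for almost every $t$ along any Carath\'eodory solution, with $t\mapsto (\lambda_p\circ\bx)(t)$ absolutely continuous.

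The crucial step is to verify the sought scalar bound on each rank-one extreme point of $\partial\phi_p(\bH(\bx))$. Pre- and post-multiplying~\eqref{eq:BC-indefinite} by a normalized top eigenvector $\bv$ (so that $\bH(\bx)\bv = \lambda_p(\bx)\bv$ and $\|\bv\|=1$), the $\alpha$-term contributes $-\alpha(\lambda_p(\bx))$, while the correction term becomes
\[
-c_\perp\big(\lambda_p(\bx)\,\bv^\top\bv - \bv^\top \bH(\bx)\bv\big) = -c_\perp(\lambda_p(\bx)-\lambda_p(\bx)) = 0,
\]
which vanishes precisely because $\bv$ is a top eigenvector. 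This is the key design insight behind~\eqref{eq:BC-indefinite}: the $c_\perp$ slack is active only in directions orthogonal to the top eigenspace, so it relaxes the constraint on $\bL_\bF\bH$ in those directions without weakening the eigenvalue bound on $\lambda_p$.

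Next, I would lift this pointwise eigenvector inequality to the entire generalized gradient by the Carath\'eodory theorem of convex hulls, exactly as in the proof of Proposition~\ref{prop:exp_bound}: any $\bA \in \partial\phi_p(\bH(\bx))$ decomposes as $\sum_s a_s\, \bv_s\bv_s^\top$ with each $\bv_s$ a normalized top eigenvector and $\sum_s a_s = 1$, giving $\bA\cdot \bL_\bF\bH(\bx) \geq -\alpha(\lambda_p(\bx))$ and hence the scalar differential inequality~\eqref{eq:indefinite-bound-largest}. Forward invariance of $\Cc$ then follows from the comparison lemma applied to $y(t) := (\lambda_p\circ\bx)(t)$: since $\alpha\in\Ke$ satisfies $\alpha(0)=0$, the inequality $\dot y \geq -\alpha(y)$ with $y(0) = \lambda_p(\bx_0) \geq 0$ forbids $y$ from becoming negative.

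The main obstacle I anticipate is that~\eqref{eq:BC-indefinite} only holds on the open neighborhood $\Ec$, not globally, so the differential inequality for $\lambda_p$ is only guaranteed while $\bx(t)\in\Ec$. I would handle this by a contradiction argument analogous to the one in the proof of Proposition~\ref{prop:BC-matrix}: if some Carath\'eodory solution entered $\{\lambda_p < 0\}$ at a time $t^* > 0$, continuity of $\lambda_p\circ\bx$ produces a last crossing at which $\lambda_p = 0$, and openness of $\Ec\supset\Cc$ guarantees the trajectory stays in $\Ec$ on a short interval past that crossing; on that interval the bound $\dot\lambda_p \geq -\alpha(\lambda_p)$ combined with $\alpha(0)=0$ rules out any further decrease, contradicting $\lambda_p(\bx(t^*)) < 0$. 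This confinement step, together with the nonsmooth treatment of $\lambda_p$ at points of eigenvalue multiplicity, is the most delicate part of the argument.
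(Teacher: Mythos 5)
Your proposal is correct and takes essentially the same approach as the paper: specialize the nonsmooth eigenvalue-evolution argument of Proposition~\ref{prop:exp_bound} to $\lambda_p$, observe that the $c_\perp$ correction term vanishes when tested against top eigenvectors, lift to the generalized gradient via Carath\'eodory, and close with the comparison lemma. The paper's proof is terser (it simply records the general bound~\eqref{eq:indefinite-bound} for all $j$ and sets $j=p$), while you spell out the eigenvector computation and the confinement-to-$\Ec$ contradiction argument explicitly, but the substance is the same.
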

\begin{proof}
    Following the proof of Proposition~\ref{prop:exp_bound}, we may bound the eigenvalues $\{\lambda_j\}_{j=1}^p$ along the trajectory at almost every time $t\geq 0$ as:
    \begin{equation}~\label{eq:indefinite-bound}
        \frac{d}{dt}(\lambda_j\circ \bx)(t) \geq -\alpha\big(
        (\lambda_p\circ\bx)(t)
        \big)-c_\perp\big(
        \lambda_p(\bx)-\lambda_j(\bx)
        \big).
    \end{equation}
    In particular, we derive~\eqref{eq:indefinite-bound-largest} for the largest eigenvalue with $j=p$.
    Hence, from the comparison lemma with any initial condition~$\bx_0\in\Cc$, the eigenvalue $\lambda_p$ must remain nonnegative at all time, preventing $\bH$ from becoming negative definite.
\end{proof}
The key idea behind the matrix inequality in~\eqref{eq:BC-indefinite} is to isolate the largest eigenvalue and prevent it from becoming negative. This can, in fact, be established even without the last term in the inequality. The last term, however, helps relax the barrier condition further. Without it, the inequality would also unnecessarily bound the evolution of all other eigenvalues, as evident in~\eqref{eq:indefinite-bound}. In a graceful manner, the relaxation vanishes as $\lambda_j$ approaches $\lambda_p$.

Note importantly that the barrier condition avoids nonsmoothness of $\lambda_p$ through relying on $\bL_\bF\bH$ rather than $\frac{d}{dt}(\lambda_p\circ \bx)$ directly. This formulation lays the foundation for the MCBF framework that facilitates the synthesis of a continuous controller.

\begin{definition}\longthmtitle{Indefinite MCBF}
    A continuously differentiable function $\map{\bH}{\real^n}{\mathbb S^p}$ is called an \textbf{indefinite matrix control barrier function} (indefinite MCBF) for system~\eqref{sys:ctrl-affine} if there exists a function $\alpha\in\Ke$ and $c_\perp \geq 0$ such that, for each $\bx$ in the set $\Cc$ defined in~\eqref{eq:safeset-indefinite}, there exists a $\bu\in\real^m$ satisfying:
    \begin{equation}
        \dot\bH(\bx,\bu) \succ -\alpha\big(
        \lambda_p(\bx)
        \big) \bI_{p\times p} -c_\perp\big(
        \lambda_p(\bx)\bI_{p\times p}-\bH(\bx)
        \big).
    \end{equation}
\end{definition}
Much like the semidefinite case, indefinite MCBFs facilitate the construction of a continuous controller.
\begin{theorem}\longthmtitle{Safety from Indefinite MCBF}\label{thm:safety_indefinite-MCBF}
    Consider the control-affine system~\eqref{sys:ctrl-affine} and the set $\Cc$ in~\eqref{eq:safeset-indefinite}.
    If $\bH$ is an indefinite MCBF, then set $\Cc$ is control invariant. In particular, the CBF-SDP:
    \begin{align}
        \label{eq:CBF-SDP-indefinite}
        \bk(\bx)  = \argmin_{\bu\in\real^m} \quad & \|\bu-\bk_\des(\bx)\|^2            \\
                 \textup{s.t.} \quad & \dot \bH(\bx,\bu) \succeq -\alpha(
        \lambda_p(\bx))\bI_{p\times p}\nonumber                                \\
                 & \qquad\qquad\qquad -c_\perp\left(
        \lambda_p(\bx)\bI_{p\times p}-\bH(\bx)
        \right)\nonumber
    \end{align}
    is continuous for all $\bx$ in some neighborhood $\Ec$ of $\Cc$. Consequently, the state-feedback $\bu=\bk(\bx)$ renders set $\Cc$ forward invariant for the closed-loop system.~\hfill$\blacksquare$
\end{theorem}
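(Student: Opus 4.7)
The plan is to mirror the proof of Theorem~\ref{thm:EMCBF-safety}, treating $\lambda_p(\bx)$ as a continuous (but nonsmooth) coefficient rather than a variable to be differentiated. First I would introduce the admissible-control set map
\begin{align*}
\Uc(\bx) = \Big\{ \bu \in \real^m : ~&\dot\bH(\bx,\bu) + \alpha(\lambda_p(\bx))\bI_{p\times p} \\
&+ c_\perp\bigl(\lambda_p(\bx)\bI_{p\times p} - \bH(\bx)\bigr) \succeq \bzero \Big\},
\end{align*}
and recast the LMI as the scalar inequality $\phi_1(\cdot) \geq 0$ applied to a matrix depending continuously on $(\bx,\bu)$. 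The crucial observation is that $\lambda_p = \phi_p \circ \bH$ is globally Lipschitz as a function of the matrix entries (again appealing to \cite[Thm.~1]{MDS-JC:09}), so the defining function of $\Uc$ is jointly continuous in $(\bx,\bu)$ even though $\lambda_p$ is not smooth in $\bx$.

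Next I would verify the hypotheses needed to invoke the minimal-selection machinery. By the definition of an indefinite MCBF, the strict inequality guarantees that $\Uc(\bx)$ has nonempty interior for each $\bx \in \Cc$; continuity of the constraint data then extends this to an open neighborhood $\Ec \supset \Cc$. Since the constraint is an LMI in $\bu$ for each fixed $\bx$, $\Uc(\bx)$ is convex-valued and closed-valued, and combined with the continuous defining function this yields lower semicontinuity of $\Uc$ on $\Ec$ via~\cite[Lem.~5.2]{GS:18}. The change of variable $\Delta\bu = \bu - \bk_\des(\bx)$ then converts~\eqref{eq:CBF-SDP-indefinite} into a minimal-selection problem on Euclidean space, so~\cite[Prop.~2.19]{RAF-PVK:96} yields continuity of $\bk = \Delta\bk + \bk_\des$ on $\Ec$. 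Forward invariance of the closed-loop vector field $\bF(\bx) = \bf(\bx) + \bg(\bx)\bk(\bx)$ follows immediately from Proposition~\ref{prop:BC-indefinite}, and control invariance follows from the existence of the admissible signal $t \mapsto \bk(\bx(t))$.

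The principal obstacle I anticipate is articulating precisely why only continuity, not differentiability, of $\lambda_p$ is needed in the controller-design step. Because $\lambda_p(\bx)$ appears both inside $\alpha$ and in $\lambda_p(\bx)\bI_{p\times p} - \bH(\bx)$, one might worry that nonsmooth analysis has to be invoked inside the SDP itself. The resolution is that the controller never differentiates $\lambda_p$: any time derivatives along trajectories are already absorbed into the invariance argument of Proposition~\ref{prop:BC-indefinite}, whose proof handles the nonsmoothness through the weak set-valued Lie derivative and the nonsmooth chain rule. The SDP itself therefore sees $\lambda_p$ merely as a continuous parameter, and the rest of the argument is structurally identical to the semidefinite case.
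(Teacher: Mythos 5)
Your proof is correct and follows precisely the minimal-selection argument the paper uses to prove Theorem~\ref{thm:EMCBF-safety}, which is exactly what the paper intends, since it explicitly omits the proof of Theorem~\ref{thm:safety_indefinite-MCBF} by reference to that earlier one. The observation that $\lambda_p$ enters the SDP constraint only as a continuous (globally Lipschitz in the matrix entries) parameter---never needing to be differentiated inside the controller synthesis, because the nonsmooth Lie-derivative machinery is confined to Proposition~\ref{prop:BC-indefinite}---is indeed the one adaptation needed to transfer the argument.
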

Theorem~\ref{thm:safety_indefinite-MCBF} provides a continuous controller for dealing with indefinite matrix safety constraints. We omit the proof for the theorem because of its similarities to the one for Theorem~\ref{thm:EMCBF-safety}. An important application of indefinite MCBFs is on a diagonal matrix constraint constructed by multiple CBFs.
\begin{corollary}\longthmtitle{OR Boolean CBFs via MCBF}\label{cor:bool-or}
    Consider the control-affine system~\eqref{sys:ctrl-affine}. Given a set of control barrier functions $\{h_i\}_{i=1}^p$ with some corresponding $\alpha_i\in\Ke$ for each $i\in\until p$, let a diagonal matrix $\bH$ be constructed with $\bH_{ii}(\bx)=-h_i(\bx)$. If $\bH$ is an indefinite MCBF with $\alpha(r)\defeq\max_{i\in\until{p}}\alpha_i(r)$ and some constant $c_\perp\geq 0$ on the  set $\Cc= \cup_{i=1}^p \Cc_i$ defined in~\eqref{eq:safeset-matrix}. Then, the state-feedback $\bu=\bk(\bx)$ using the safety filter \eqref{eq:CBF-SDP-indefinite} renders the  set $\Cc$ forward invariant for the closed-loop system.~\hfill$\blacksquare$
\end{corollary}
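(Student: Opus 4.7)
The plan is to argue that this corollary is an immediate specialization of Theorem~\ref{thm:safety_indefinite-MCBF} once two observations about the diagonal construction are made. I will not need any new technical machinery; the heavy lifting---continuity of the SDP filter, forward invariance via the matrix barrier condition---was already carried out in the proof of the indefinite MCBF safety theorem.

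First, I would verify that the hypothesized set $\Cc = \cup_{i=1}^p \Cc_i$ coincides with the indefinite safe set $\{\bx : \bH(\bx) \not\prec \bzero\}$ on which Theorem~\ref{thm:safety_indefinite-MCBF} operates. Since $\bH(\bx)$ is diagonal, its eigenvalues are exactly its diagonal entries $\{-h_i(\bx)\}_{i=1}^p$, so the largest eigenvalue satisfies $\lambda_p(\bx) = \max_i (-h_i(\bx))$. The condition $\bH \not\prec \bzero$---equivalently $\lambda_p(\bx) \geq 0$---therefore reduces to the statement that at least one of the individual scalar conditions holds, which is precisely the union structure $\cup_i \Cc_i$ once sign conventions for the individual safe sets are aligned with the diagonal sign chosen for $\bH$.

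Second, I would note that $\alpha(r) \defeq \max_{i \in \until{p}} \alpha_i(r)$ belongs to $\Ke$ as the pointwise maximum of finitely many such functions (continuity and the normalization $\alpha(0)=0$ are preserved, and strict monotonicity follows from the common strict monotonicity of the $\alpha_i$). Moreover, this choice is the least conservative common bound: whenever each $\dot h_i$ admits a control satisfying its own scalar CBF condition with $\alpha_i$, the max-composition dominates and yields feasibility of the matrix barrier condition in the definition of an indefinite MCBF. Because the corollary explicitly hypothesizes $\bH$ to be an indefinite MCBF with this $\alpha$ and some $c_\perp \geq 0$, the pointwise existence of a feasible input at every $\bx \in \Cc$ is granted without further work.

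With the set identification and the MCBF hypothesis in place, Theorem~\ref{thm:safety_indefinite-MCBF} applies verbatim and delivers both continuity of the CBF-SDP safety filter~\eqref{eq:CBF-SDP-indefinite} on a neighborhood of $\Cc$ and forward invariance of $\Cc$ under the closed-loop state feedback $\bu = \bk(\bx)$. The only mildly delicate step is the bookkeeping around signs in the first observation---making sure the diagonal entries of $\bH$ are assembled so that the indefinite safe set actually coincides with the OR-composition $\cup_i \Cc_i$ rather than some complement or intersection. Beyond that, no novel analytical obstacle arises.
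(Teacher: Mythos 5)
Your overall strategy is right and is exactly what the paper intends: the corollary appears with a terminal $\blacksquare$ and no separate proof block, meaning it is meant to follow directly by specializing Theorem~\ref{thm:safety_indefinite-MCBF} to the diagonal case, which is the structure you set up (identify $\cup_i \Cc_i$ with the indefinite safe set of \eqref{eq:safeset-indefinite}, check that $\alpha = \max_i \alpha_i \in \Ke$, then invoke the theorem). The $\Ke$-closure-under-finite-max check is correct.

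The gap is in the step you flag but then wave off as ``bookkeeping.'' With $\bH_{ii}(\bx) = -h_i(\bx)$ as written, the eigenvalues of the diagonal matrix $\bH(\bx)$ are $\{-h_i(\bx)\}_{i=1}^p$, so $\lambda_p(\bx) = \max_i(-h_i(\bx)) = -\min_i h_i(\bx)$, and the indefinite safe set $\{\bx : \lambda_p(\bx) \geq 0\}$ of \eqref{eq:safeset-indefinite} is $\cup_i \{\bx : h_i(\bx) \leq 0\}$. This does \emph{not} coincide with $\cup_i \Cc_i$ when $\Cc_i = \{\bx : h_i(\bx) \geq 0\}$, the natural reading coming from \eqref{eq:safeset-scalar} (and the one the paper's own discussion after the corollary confirms, since it identifies $\lambda_p$ with $h_{\max} = \max_i h_i$, which forces $\bH_{ii} = h_i$ without the negation). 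So the as-stated sign convention is actually inconsistent with the intended set identification; it is not a matter of conventions you may freely ``align.'' To apply Theorem~\ref{thm:safety_indefinite-MCBF} verbatim you must pin down $\bH_{ii}$ and $\Cc_i$ so that $\cup_i \Cc_i = \{\bx : \lambda_p(\bx) \geq 0\}$; otherwise the theorem renders the \emph{wrong} set invariant. Either take $\bH_{ii} = h_i$ with $\Cc_i = \{h_i \geq 0\}$, or keep $\bH_{ii} = -h_i$ but redefine $\Cc_i = \{\bH_{ii} \geq 0\}$. State whichever resolution you adopt explicitly; deferring it is where the proof, as written, does not close.
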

Our MCBF formulation enables the synthesis of continuous controllers that handle nonsmooth Boolean-based constraints. In particular, Corollary~\ref{cor:bool-or} offers a mechanism for addressing OR  Boolean constraints where it suffices to maintain the positivity of any one CBF at a given time. In addition, the approach provides a bound on the evolution of the maximum eigenvalue $\lambda_p$ as in~\eqref{eq:indefinite-bound-largest}. This coincides with the CBF that has a largest value at a given state, i.e., $h_{\max}(\bx) \defeq \max_{i\in \until{p}} h_i(\bx)$. Notice that the bound~\eqref{eq:indefinite-bound-largest} obtained involves the more general class-$\Ke$ function, rather than a linear one, because we deal directly with eigenvalues. This insight motivates the developments in the next section.

\section{General Matrix Control Barrier Functions}
A limitation of exponential MCBFs developed earlier is that their construction~\eqref{eq:CBC-matrix} only permits a linear relationship between $\dot \bH$ and $\bH$.
This section formulates the general class of MCBFs that allows the use of class-$\Ke$ functions.

\subsection{Smallest Eigenvalue MCBF}
Building on the development from the last section, we propose a barrier condition that maintains positivity of the lowest eigenvalue.
\begin{proposition}\longthmtitle{Smallest Eigenvalue Matrix Barrier Condition}
    \label{prop:BC-smallest}
    Consider the autonomous system~\eqref{sys:auto} with a continuous vector field $\bF$  and the set $\Cc$ in~\eqref{eq:safeset-matrix}. If there exists a function $\alpha\in\Ke$ and $c_\perp \geq 0$ such that:
    \begin{equation}\label{eq:BC-smallest}
        \bL_\bF\bH(x) \succeq -\alpha\big(
        \lambda_1(\bx)
        \big)\bI_{p\times p}-c_\perp\big(
        \bH(\bx)-\lambda_1(\bx)\bI_{p\times p}
        \big)
    \end{equation}
    for all $\bx$ in an open neighborhood $\Ec\supset\Cc$, then the following bound holds:
    \begin{equation}\label{eq:eig-bound-smallest}
        \frac{d}{dt}(\lambda_1\circ \bx)(t) \geq -\alpha\big(
        (\lambda_1\circ\bx)(t)
        \big).
    \end{equation}
    at almost every time $t\geq0$ along any Carath\'eodory solution starting from $\bx_0\in\Cc$. As a consequence, set $\Cc$ is forward invariant for the system.~\hfill$\blacksquare$
\end{proposition}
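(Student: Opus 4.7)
The plan is to mirror the proof strategy of \cref{prop:exp_bound} and \cref{prop:BC-indefinite}, but now focusing on the smallest eigenvalue $\lambda_1$ rather than $\lambda_p$. The core observation that makes this work is that any unit eigenvector $\bv$ associated with $\phi_1(\bH)$ satisfies $\bv^\top\bH(\bx)\bv = \lambda_1(\bx)$, so that $\bv^\top\big(\bH(\bx)-\lambda_1(\bx)\bI_{p\times p}\big)\bv = 0$. This means the extra $c_\perp$ term in the barrier condition~\eqref{eq:BC-smallest} vanishes identically when contracted against $\bv\bv^\top$, leaving only the $\alpha(\lambda_1)$ term to propagate into the scalar bound.

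First, I would invoke the nonsmooth chain rule via the weak set-valued Lie derivative, writing $\frac{d}{dt}(\lambda_1\circ\bx)(t) = \bA(t)\cdot \bL_\bF\bH(\bx(t))$ almost everywhere for some $\bA(t)\in\partial\phi_1(\bH(\bx(t)))$. Using the characterization $\partial\phi_1(\bH) = \operatorname{co}\setdefb{\bv\bv^\top}{\bH\bv=\phi_1(\bH)\bv,\,\|\bv\|=1}$ together with Carath\'eodory's theorem for convex hulls, I write $\bA(t)=\sum_{s} a_s(t)\bv_s(t)\bv_s(t)^\top$ with nonnegative weights summing to one and each $\bv_s$ a unit eigenvector of $\bH(\bx(t))$ at eigenvalue $\lambda_1(\bx(t))$. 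Contracting the barrier condition~\eqref{eq:BC-smallest} term by term against each $\bv_s\bv_s^\top$ gives
\begin{equation*}
    \bv_s^\top\bL_\bF\bH(\bx)\bv_s \geq -\alpha(\lambda_1(\bx))\|\bv_s\|^2 - c_\perp\big(\bv_s^\top\bH(\bx)\bv_s - \lambda_1(\bx)\|\bv_s\|^2\big) = -\alpha(\lambda_1(\bx)),
\end{equation*}
since the $c_\perp$ term is zero. Taking the convex combination with weights $a_s(t)$ then yields $\frac{d}{dt}(\lambda_1\circ\bx)(t) \geq -\alpha(\lambda_1(\bx(t)))$, which is exactly~\eqref{eq:eig-bound-smallest}.

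With the scalar differential inequality in hand, forward invariance of $\Cc=\{\bx:\lambda_1(\bx)\geq 0\}$ follows from the standard comparison lemma: since $\alpha\in\Ke$ is locally Lipschitz (or at worst satisfies $\alpha(0)=0$ with the usual comparison argument used for class-$\Ke$ functions in the CBF literature), any Carath\'eodory solution starting with $\lambda_1(\bx_0)\geq 0$ cannot cross zero. One must also note that the bound is only required to hold on the neighborhood $\Ec\supset\Cc$, so the argument is phrased in the style of \cref{prop:BC-matrix}: if the trajectory were to exit $\Cc$, continuity provides an interval inside $\Ec\setminus\Cc$ on which $\lambda_1$ is nonpositive yet has nonnegative derivative, yielding a contradiction.

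The main obstacle is, as in \cref{prop:exp_bound}, the nonsmoothness of $\lambda_1$ at points with repeated smallest eigenvalues. However, the machinery is already in place: the weak set-valued Lie derivative together with the explicit generalized gradient formula for $\phi_1$ handles this cleanly, and the algebraic cancellation $\bv^\top(\bH-\lambda_1\bI_{p\times p})\bv=0$ is precisely what allows the class-$\Ke$ function $\alpha$ (rather than only a linear one, as in the exponential MCBF) to appear in the resulting scalar bound. The only subtlety worth double-checking is that the weak set-valued Lie derivative inequality is preserved over convex combinations, which is immediate because the bound derived for each $\bv_s\bv_s^\top$ is independent of $s$.
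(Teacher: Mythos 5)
Your proof is correct and takes exactly the route the paper intends: the paper omits the proof of this proposition (marking it with $\blacksquare$) because it is a direct analogue of Proposition~\ref{prop:BC-indefinite}, which in turn defers to the Carath\'eodory/weak-set-valued-Lie-derivative machinery of Proposition~\ref{prop:exp_bound}, and your argument reproduces precisely that chain, including the key cancellation $\bv^\top(\bH-\lambda_1\bI_{p\times p})\bv = 0$ for unit eigenvectors of $\phi_1$ and the comparison-lemma conclusion on the neighborhood $\Ec$.
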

While the barrier condition proposed in Proposition~\ref{prop:BC-smallest} successfully introduces a class-$\Ke$ function to the bound, it only provides a bound on the smallest eigenvalue. With a slight adjustment of setting $c_\perp=0$, bounds can be developed for other eigenvalues since $\alpha(\lambda_1(\bx))\leq \alpha(\lambda_j(\bx))$ for any $j\in\until{p}$. However, such an approach is inherently conservative as it essentially bounds the evolution of all eigenvalues with the smallest eigenvalue $\lambda_1$ rather than directly with each $\lambda_j$. To this end, we omit the result on the corresponding MCBF associated with this barrier condition.

\subsection{General Matrix CBFs}
The more appropriate approach to bounding each eigenvalue involves matrix diagonalization through spectral decomposition. For a given symmetric matrix $\bH(\bx)$, we have:
$$
    \bH(\bx) = \bV(\bx)\bLambda(\bx)\bV(\bx)^\top
$$
where $\bLambda(\bx)$ is a diagonal matrix constructed from eigenvalues, $\Lambda_{ii}(\bx)=\lambda_i(\bx)$, and $\bV(\bx)$ is a matrix constructed from concatenating corresponding eigenvectors. Once diagonalized, we may apply the class-$\Ke$ function on each eigenvalue. We denote with $\bLambda_\alpha(\bx)$ the resulting matrix from applying the function $\alpha\in\Ke$ element-wise on the matrix $\bLambda(\bx)$:
    $$
        \bLambda_\alpha(\bx) \defeq \diag\big(\alpha(\lambda_1(\bx)),\dots,\alpha(\lambda_p(\bx))\big).
    $$
    Then we use the following notation for applying a class-$\Ke$ function to a matrix:
\begin{equation}
\label{eq:balpha}
        \balpha(\bH(\bx)) \defeq \bV(\bx)\bLambda_\alpha(\bx)\bV(\bx)^\top,
\end{equation}
    for which we can develop a barrier condition that mirrors the scalar case.
\begin{proposition}\longthmtitle{Matrix Barrier Condition}
    \label{prop:BC-diagonalization}
    Consider the autonomous system~\eqref{sys:auto} with a continuous vector field $\bF$  and the set $\Cc$ in~\eqref{eq:safeset-matrix}. If there exists a function $\alpha\in\Ke$ such that:
    \begin{equation}\label{eq:BC-diagonalization}
        \bL_\bF\bH(x) \succeq -\balpha(\bH(\bx))
    \end{equation}
    for all $\bx$ in an open neighborhood $\Ec\supset\Cc$, then the following bound holds:
    \begin{equation}\label{eq:eig-bound-all}
        \frac{d}{dt}(\lambda_j\circ \bx)(t) \geq -\alpha((\lambda_j\circ\bx)(t)), \forall j\in\until{p},
    \end{equation}
    at almost every time $t\geq0$ along any Carath\'eodory solution starting from $\bx_0\in\Cc$. As a consequence, set $\Cc$ is forward invariant for the system.
\end{proposition}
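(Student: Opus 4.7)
The plan is to follow the same template used in the proof of Proposition~\ref{prop:exp_bound}, substituting $\balpha(\bH(\bx))$ for $c_\alpha \bH(\bx)$ and exploiting the fact that, in the spectral decomposition, $\bH(\bx)$ and $\balpha(\bH(\bx))$ share the same eigenvectors. First, I would invoke the weak set-valued Lie derivative and the nonsmooth chain rule \cite[Thm. 2.3.10]{FHC:83} to write, at almost every time $t\ge 0$,
\[
    \tfrac{d}{dt}(\lambda_j\circ \bx)(t) = \bA(t)\cdot \bL_\bF\bH(\bx(t))
\]
for some $\bA(t)\in\partial\phi_j(\bH(\bx(t)))$. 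Then, using Carath\'eodory's theorem on convex hulls exactly as in the proof of Proposition~\ref{prop:exp_bound}, I would express $\bA(t)$ as a convex combination $\bA(t)=\sum_s a_s(t)\,\bv_s(t)\bv_s(t)^\top$, where each $\bv_s(t)$ is a normalized eigenvector of $\bH(\bx(t))$ associated with the eigenvalue $\phi_j(\bH(\bx(t)))=\lambda_j(\bx(t))$.

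The key algebraic step, and the one that differs from the exponential case, is the identity
\[
    \bv_s^\top\, \balpha(\bH(\bx))\, \bv_s \;=\; \alpha(\lambda_j(\bx)),
\]
whenever $\bv_s$ is a unit eigenvector of $\bH(\bx)$ with eigenvalue $\lambda_j(\bx)$. This follows directly from the definition~\eqref{eq:balpha}: in the basis of eigenvectors stored in $\bV(\bx)$, the matrix $\balpha(\bH(\bx))$ has eigenvalue $\alpha(\lambda_j(\bx))$ on the eigenspace of $\lambda_j(\bx)$, so $\balpha(\bH(\bx))\bv_s = \alpha(\lambda_j(\bx))\bv_s$. (When $\lambda_j$ has multiplicity larger than one, any unit vector in the corresponding eigenspace inherits this property, which is what I need since $\partial\phi_j$ ranges over exactly such vectors.) I should state this observation cleanly, perhaps as a one-line lemma, since it is the substantive new ingredient.

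Combining the pieces, I would bound
\begin{align*}
    \bA\cdot \bL_\bF\bH(\bx)
        &= \sum_s a_s\, \bv_s^\top \bL_\bF\bH(\bx)\bv_s \\
        &\geq -\sum_s a_s\, \bv_s^\top \balpha(\bH(\bx))\bv_s \\
        &= -\sum_s a_s\, \alpha(\lambda_j(\bx)) \;=\; -\alpha(\lambda_j(\bx)),
\end{align*}
where the inequality uses~\eqref{eq:BC-diagonalization} applied to each $\bv_s$. This yields~\eqref{eq:eig-bound-all}. Finally, forward invariance follows from the comparison lemma applied to $j=1$: since $\lambda_1(\bx_0)\ge 0$ for $\bx_0\in\Cc$ and $\dot\lambda_1 \geq -\alpha(\lambda_1)$ with $\alpha\in\Ke$, the smallest eigenvalue remains nonnegative for all $t\ge 0$, so $\bH(\bx(t))\succeq 0$ and $\bx(t)\in\Cc$.

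The main obstacle I expect is purely expository: justifying that the eigenvector identity $\balpha(\bH)\bv = \alpha(\lambda_j)\bv$ is well-posed under eigenvalue multiplicity and does not depend on the particular choice of diagonalizing basis $\bV(\bx)$. Everything else is a direct transcription of the proof of Proposition~\ref{prop:exp_bound}, with the linear bound $c_\alpha\lambda_j$ replaced by the nonlinear bound $\alpha(\lambda_j)$.
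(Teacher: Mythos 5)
Your proposal is correct and follows exactly the route the paper intends: the paper's own proof is a one-liner stating that the bound follows from the orthonormality of $\bV(\bx)$ by the logic of Proposition~\ref{prop:exp_bound}, and you have filled in precisely the right details---in particular the key eigenvector identity $\bv_s^\top\balpha(\bH(\bx))\bv_s=\alpha(\lambda_j(\bx))$, together with the observation that well-posedness under eigenvalue multiplicity is automatic because $\balpha(\bH)$ shares the eigenspaces of $\bH$ and $\alpha$ is applied uniformly.
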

\begin{proof}
    Since the matrix $\bV(\bx)$ is orthonormal at each $\bx$, we can deduce the evolution bound \eqref{eq:eig-bound-all} using the logic of the proof for Proposition~\ref{prop:exp_bound}.
\end{proof}
Proposition~\ref{prop:BC-diagonalization} lays the necessary groundwork for the development of a general class of MCBFs that can impose a class-$\Ke$ function bound on all of the eigenvalues. Note importantly that the class-$\Ke$ function must be uniform for all eigenvalues, unlike in the diagonal case.
\begin{remark}\longthmtitle{Uniform class-$\Ke$ function}
    Diagonal matrices enjoy the luxury of selecting different $\alpha_i$ for each eigenvalue $\lambda_i$. For these matrices, it is possible to correspond the eigenvalues with eigenvectors. However, the correspondence is unclear for general symmetric matrices, especially when there are repeated eigenvalues. By imposing different $\alpha_i$, repeated eigenvalues scale to different values. The uniqueness of the resulting matrix $\balpha(\bH(\bx))$ requires careful labeling of eigenvalues and associated eigenvectors. In some cases, this is possible, but we choose to avoid this for simplicity.~\hfill$\bullet$
\end{remark}

\begin{definition}\longthmtitle{Matrix CBF}
    A continuously differentiable function $\map{\bH}{\real^n}{\mathbb S^p}$ is called a \textbf{matrix control barrier function} (MCBF) for system~\eqref{sys:ctrl-affine} if there exists a function $\alpha\in\Ke$ such that, for each $\bx$ in the set $\Cc$ defined in~\eqref{eq:safeset-matrix}, there exists a $\bu\in\real^m$ satisfying:
    \begin{equation}\label{eq:MCBF}
        \dot\bH(\bx,\bu) \succ -\balpha(\bH(\bx)).
    \end{equation}
    with $\balpha(\bH(\bx))$ defined in \eqref{eq:balpha}.
\end{definition}
The introduction of the eigenvector matrix $\bV$ raises an important concern regarding the continuity of the $\balpha\circ \bH$. Such a guarantee facilitates establishing the lower semicontinuity of the constraint set generated by the MCBF, thereby ensuring continuity of the resulting safety filter.
The following result resolves this issue.

\begin{lemma}\longthmtitle{Continuity of the Matrix Class-$\Ke$ Functions}\label{lem:continuity_class-K}
    Let the matrix-valued function $\bH$ be continuous and $\alpha$ be a class-$\Ke$ function. The matrix-valued function $\balpha\circ\bH$ is continuous if $\{\bv_i\}_{j=1}^p$ are selected orthonormal.
\end{lemma}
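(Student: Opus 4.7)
The plan is to reduce this claim to the standard fact of continuous functional calculus for symmetric matrices: if $\map{f}{\real}{\real}$ is continuous, then the induced map $\bM\mapsto f(\bM)$ on $\mathbb S^p$, defined by $f(\bM) = \bV\diag(f(\lambda_1),\dots,f(\lambda_p))\bV^\top$, is continuous in the matrix norm. Applying this to $f=\alpha$ (which is continuous since $\alpha\in\Ke$) and composing with the hypothesized continuous $\bH$ then yields the lemma. The subtlety is that neither factor $\bV(\bx)$ nor $\bLambda_\alpha(\bx)$ is continuous in isolation when eigenvalues collide, so continuity of their product is nontrivial.

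A preliminary step is to verify that $\balpha(\bH(\bx))$ is well-defined, i.e., independent of which orthonormal eigenbasis is selected. Grouping the spectrum by distinct eigenvalues $\mu_k(\bx)$ with associated orthogonal spectral projectors $\bP_k(\bx)$, one may rewrite $\balpha(\bH(\bx)) = \sum_k \alpha(\mu_k(\bx))\bP_k(\bx)$, which depends only on the eigenspaces and not on the particular orthonormal basis chosen within each eigenspace. This is precisely the role of the orthonormality assumption stated in the lemma.

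The main step is polynomial approximation. Fix a base point $\bx_0$ and note that, by Weyl's perturbation inequality, $|\lambda_i(\bH(\bx))-\lambda_i(\bH(\bx_0))|\le \|\bH(\bx)-\bH(\bx_0)\|_2$ for each $i$, so the continuity of $\bH$ implies that on some neighborhood of $\bx_0$ all eigenvalues lie in a compact interval $I\subset\real$. For any $\epsilon>0$, the Weierstrass approximation theorem furnishes a polynomial $q$ with $\sup_{r\in I}|\alpha(r)-q(r)|<\epsilon$. Since applying a polynomial to a symmetric matrix is equivalent to applying it entry-wise to its spectrum under any orthonormal diagonalization, this yields $\|\balpha(\bH(\bx)) - q(\bH(\bx))\|_F \le \sqrt{p}\,\epsilon$ uniformly on the neighborhood. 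The map $\bx\mapsto q(\bH(\bx))$ is manifestly continuous since it involves only matrix additions and multiplications of the continuous $\bH(\bx)$. A triangle-inequality argument then gives continuity of $\balpha\circ\bH$ at $\bx_0$, and $\bx_0$ was arbitrary.

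The main obstacle is conceptual rather than computational: one must resist the temptation to work with $\bV(\bx)$ and $\bLambda_\alpha(\bx)$ directly, since eigenvectors (and even ordered eigenvalues, when crossings occur) cannot in general be chosen continuously as functions of the matrix. The polynomial-approximation route sidesteps this difficulty by expressing $\balpha(\bH)$ entirely through continuous matrix operations on $\bH$ itself, thereby bypassing any need for a continuous eigenvector selection.
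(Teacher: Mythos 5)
Your proof is correct, and it takes a genuinely different route than the paper. The paper's proof leans on Kato's analytic perturbation theory: it groups the rank-one terms $\bv_j\bv_j^\top$ sharing a common eigenvalue into total spectral projectors and invokes the classical result that total projectors vary continuously (indeed holomorphically) with the matrix, which is exactly where orthonormality and the uniformity of $\alpha$ across repeated eigenvalues enter. You instead recognize $\balpha(\bH)$ as the continuous functional calculus $\alpha(\bH)$ and establish its continuity by the polynomial-approximation argument: Weyl's inequality confines the spectrum to a compact interval near any $\bx_0$, Weierstrass gives a polynomial $q$ uniformly approximating $\alpha$ there, $\|\balpha(\bH)-q(\bH)\|_F\le\sqrt{p}\,\epsilon$ follows from applying both functions spectrally, and $q(\bH(\bx))$ is trivially continuous as matrix arithmetic in $\bH(\bx)$. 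Your route is more elementary and entirely self-contained (no perturbation-theoretic machinery), and it makes vivid the point that one should \emph{not} reason about $\bV(\bx)$ and $\bLambda_\alpha(\bx)$ separately. The paper's route buys more structure: it localizes exactly where non-uniqueness of eigenvectors lives, and the Kato citation actually delivers smoothness of the total projectors on strata of constant multiplicity, which your Weierstrass argument cannot recover (it yields continuity only). Both correctly identify that orthonormality is what makes $\balpha(\bH)$ well-defined, i.e., basis-independent within each eigenspace, which your preliminary step states cleanly.
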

\begin{proof}
    First, we write the matrix function as:
    \begin{align*}
        \balpha(\bH(\bx)) & = \sum_{j=1}^p \alpha(\lambda_j(\bx))\bv_j(\bx)\bv_j(\bx)^\top \\
                                & = \sum_{j=1}^p \alpha(\lambda_j(\bx))\bP_j(\bx)
    \end{align*}
    in terms of functions $\bP_j(\bx)\defeq\bv_j(\bx)\bv_j(\bx)^\top$. The matrix $\bP_j$ is precisely the eigenspace projector when $\lambda_j$ has a multiplicity of one. For higher multiplicity, $\bP_j$ is not unique, so depending on the selection of $\bv_j$, it may not be continuous. Because $\bv_j$ are orthonormal and span the entire eigenspace, the eigenspace projector is then given by the sum of the projectors with the same eigenvalue, i.e., $\sum_{j\in\mathcal{J(\bx)}}\bP_j(\bx)$ where $\mathcal{J}(\bx) = \setdef{i\in\until{p}}{ \lambda_i(\bx)=\lambda_j(\bx)}$. From \cite[Chapter 2 Sec. 1]{TK:13} (see the subsection 8 for a brief summary), the total projector $\sum_{j\in\mathcal{J(\bx)}}\bP_j(\bx)$ is holomorphic at $\bx$ and adopts smoothness of $\bH$ on the manifold where total eigenvalue multiplicity $|\mathcal{J}(\bx)|$ is constant. Therefore, since $\alpha(\lambda_i(\bx))=\alpha(\lambda_j(\bx))$ for all $i\in\mathcal J(\bx)$, we may group $\bP_j(\bx)$'s with the summation $\sum_{j\in\mathcal J(\bx)}\bP_j(\bx)$ to deduce continuity of $\balpha\circ \bH$ as desired.
\end{proof}
With this continuity result, the safety result follows with a similar proof to earlier theorems on safety.
\begin{theorem}\longthmtitle{Safety from MCBF}\label{thm:safety_MCBF}
    Consider the control-affine system~\eqref{sys:ctrl-affine} and the set $\Cc$ in~\eqref{eq:safeset-matrix}.
    If $\bH$ is an MCBF, then set $\Cc$ is control invariant. In particular, the CBF-SDP:
    \begin{align}
        \label{eq:CBF-SDP-general}
        \bk(\bx) = \argmin_{\bu\in\real^m} \quad &\|\bu-\bk_\des(\bx)\|^2                               \\
                  \textup{s.t.} \quad &\dot \bH(\bx,\bu) \succeq -\balpha(\bH(\bx))\nonumber
    \end{align}
    is continuous for all $\bx$ in some neighborhood $\Ec$ of $\Cc$. Consequently, the state-feedback $\bu=\bk(\bx)$ renders set $\Cc$ forward invariant for the closed-loop system.~\hfill$\blacksquare$
\end{theorem}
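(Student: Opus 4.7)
The plan is to mirror the proof of Theorem~\ref{thm:EMCBF-safety}, replacing the linear bound $c_\alpha \bH(\bx)$ by the general matrix class-$\Ke$ expression $\balpha(\bH(\bx))$, with the continuity of this new term furnished by Lemma~\ref{lem:continuity_class-K}. First I would introduce the set-valued feasibility map
\[
\Uc(\bx) = \setdefb{\bu\in\real^m}{\dot\bH(\bx,\bu) + \balpha(\bH(\bx)) \succeq \bzero} = \setdefb{\bu\in\real^m}{\phi_1\big(\dot\bH(\bx,\bu)+\balpha(\bH(\bx))\big)\geq 0},
\]
where $\phi_1$ returns the minimum eigenvalue. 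Entry-wise continuous differentiability of $\bH$, global Lipschitzness of $\phi_1$ on $\mathbb{S}^p$, and Lemma~\ref{lem:continuity_class-K} together ensure that the scalar map $\phi_1\circ(\dot\bH+\balpha\circ\bH)$ is jointly continuous in $(\bx,\bu)$.

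Next, I would argue that $\Uc$ has nonempty interior on an open neighborhood $\Ec\supset\Cc$. The strict inequality in the MCBF definition~\eqref{eq:MCBF} gives a $\bu$ with $\phi_1(\dot\bH(\bx,\bu)+\balpha(\bH(\bx)))>0$ for every $\bx\in\Cc$; joint continuity then propagates the strict feasibility to a neighborhood of each such $\bx$, and the union of these neighborhoods provides the desired $\Ec$. Combined with convex-valuedness (the defining condition is an LMI in $\bu$) and closed-valuedness, the nonempty-interior and continuous-defining-function properties yield lower semicontinuity of $\Uc$ on $\Ec$ via \cite[Lem. 5.2]{GS:18}.

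With lower semicontinuity in hand, I would substitute $\Delta\bu = \bu-\bk_\des(\bx)$ to rewrite the CBF-SDP as a minimal-selection problem with continuous convex-valued constraint, and invoke \cite[Prop. 2.19]{RAF-PVK:96} to conclude that $\Delta\bk$, and therefore $\bk(\bx)=\Delta\bk(\bx)+\bk_\des(\bx)$, is continuous on $\Ec$. Forward invariance of $\Cc$ for the closed-loop vector field $\bF(\bx)=\bf(\bx)+\bg(\bx)\bk(\bx)$ then follows directly from Proposition~\ref{prop:BC-diagonalization}, and control invariance follows from the existence of the signal $t\mapsto\bk(\bx(t))$.

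The main obstacle, compared to the exponential case, is exactly the continuity of $\balpha\circ\bH$: at points where the eigenvalues of $\bH$ collide, the individual spectral projectors $\bv_j\bv_j^\top$ may jump discontinuously, which would destroy the lower semicontinuity argument above. Lemma~\ref{lem:continuity_class-K} resolves this by grouping projectors over constant-eigenvalue blocks so that $\balpha\circ\bH$ depends only on total eigenprojectors, which are holomorphic in $\bH$. Once this lemma is available, the remaining steps are routine adaptations of the exponential MCBF proof.
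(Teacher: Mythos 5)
Your proposal is correct and matches the approach the paper intends: the paper omits this proof precisely because it is the proof of Theorem~\ref{thm:EMCBF-safety} with $c_\alpha\bH$ replaced by $\balpha(\bH)$, and you correctly identify that Lemma~\ref{lem:continuity_class-K} is the missing ingredient needed to keep the defining function $\phi_1\circ(\dot\bH+\balpha\circ\bH)$ continuous (hence the constraint map lower semicontinuous) at eigenvalue crossings, with forward invariance then supplied by Proposition~\ref{prop:BC-diagonalization} in place of Proposition~\ref{prop:BC-matrix}.
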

Theorem~\ref{thm:safety_MCBF} guarantees safety with a continuous CBF-SDP safety filter. We would like to emphasize that the function $\balpha$ does not apply the class-$\Ke$ function~$\alpha$ on $\bH$ entry-wise but on its eigenvalues. Through spectral decomposition, we extend the exponential MCBF from~\ref{sec:EMCBF} to the more general class with class-$\Ke$ function. Note importantly from Lemma~\ref{lem:continuity_class-K} that the selection of the orthonormal eigenvectors $\{\bv_j\}_{j=1}^p$ can be independently performed for each $\bx$, making the constraint formulation for the safety filter~\eqref{eq:CBF-SDP-general} practical.

\section{Further Discussions}
In this section, we offer several analytical insights and possible refinements of the proposed MCBF framework. These include a less conservative barrier condition, a more general form of the safety filter, and a comparison between matrix- and scalar-valued representations of safe sets. While we do not provide the full technical details, we consider these discussions to be an important complement to the main results.

\subsection{Less Conservative Barrier Condition}
The barrier condition proposed in~\eqref{eq:BC-diagonalization} is not the least conservative formulation possible. By Sylvester's criterion~\cite[Thm. 7.2.5]{RAH-CRJ:12}, a matrix is positive semidefinite if and only if all of its leading principal minors\footnote{Leading principal minors are the determinants of the upper left $j\times j$ corner submatrices for all $j\in\until{p}$.} are nonnegative. This, in turn, implies that all upper left $j\times j$ corner submatrices must be positive semidefinite. Consequently, the matrix inequality in~\eqref{eq:BC-diagonalization} is equivalent to:
\begin{align*}
    \bV_{\until{j}}^\top(\bx)\bL_\bf\bH(\bx)\bV_{\until{j}}(\bx) & \succeq -\diag\big(\alpha(\lambda_1(\bx)),\dots,\alpha(\lambda_j(\bx))\big)
\end{align*}
for all $j\in\until{p}$, where the matrix $\bV_{\until{j}}(\bx)$ is the concatenation of only the orthonormal eigenvectors associated with the smallest $j$ eigenvalues. Here, the minimum eigenvalue of the left hand side can be interpreted as the worst-case rate of change among the $j$ smallest eigenvalues. Thus, we can further relax the condition by replacing the right hand side with:
\begin{equation}\label{eq:BC_multiple_LMIs}
    \bV_{\until{j}}^\top(\bx)\bL_\bf\bH(\bx)\bV_{\until{j}}(\bx) \succeq -\alpha(\lambda_j(\bx))\bI_{p\times p}
\end{equation}
while maintaining safety guarantees.

The condition~\eqref{eq:BC_multiple_LMIs} is equivalent to the barrier condition proposed in the previous work~\cite{PO-BC-LS-JC:23-auto}. However, that work formulated the control barrier function as a hierarchical optimization and did not make a connection to LMIs, which greatly reduce computational overhead during implementation. From a computational standpoint, the barrier condition~\eqref{eq:BC_multiple_LMIs} scales poorly with matrix size because it contains multiple LMIs. In contrast, the CBF~\eqref{eq:CBF-SDP-general} proposed in this paper requires a single LMI. For this reason, we omit the full technical treatment of this alternative.

\subsection{Safety Filter Cost Functions}
Beyond the choice of barrier conditions, the cost function plays a key role in the behavior of safety filters. The standard safety filter cost function encodes minimal deviations from desired control inputs. For this particular cost function, we have established the continuity of the safety filter induced by a MCBF. However, this cost function can be generalized as:
\begin{align*}
    \bk(\bx)  = \argmin_{\bu\in\real^m} \quad & J(\bx,\bu)                                  \\
             \textup{s.t.} \quad & \dot \bH(\bx,\bu) \succeq -\balpha(\bH(\bx))
\end{align*}
with some cost function $\map{J}{\real^n\times\real^m}{\real}$. In this case, continuity of the safety filter can be established if $J$ is continuous in $\bx$ and strictly convex in $\bu$ (or simply if the sub-level sets of $J$ is compact for each $\bx$). We omit the result due to the increased complexity in the proof, cf.~\cite{PO-BC-LS-JC:23-auto}.

\subsection{Schur's Complement Reformulation}

Schur's complement is often useful for reformulating matrix safety constraints as scalar ones, and vice versa. We briefly investigate here the differences between the two representations. While both representations of the safe set are equivalent, the resulting CBF constraints are not, as illustrated with the following spherical constraint example. 

Consider the control-affine system~\eqref{sys:ctrl-affine} with a simple spherical constraint of radius $R$ centered at $\bx_\obs$. There are two equivalent representations:
$$
    \|\bx-\bx_\obs\|^2\leq R^2 \iff \begin{bmatrix}
        \bI_{n\times n}     & \bx-\bx_\obs \\
        (\bx-\bx_\obs)^\top & R^2
    \end{bmatrix}\succeq \bzero
$$
On the one hand, the standard exponential CBF approach gives the following constraint:
\begin{equation*}
    -2(\bx-\bx_\obs)^\top\big(\bf(\bx)+\bg(\bx)\bu\big) \geq -c_\alpha(R^2-\|\bx-\bx_\obs\|^2),
\end{equation*}
which can be algebraically rearranged into:
\begin{multline*}
    c_\alpha R^2 - \frac{1}{c_\alpha}\|c_\alpha (\bx-\bx_\obs)+\bf(\bx)+\bg(\bx)\bu\|^2 \\\geq -\frac{1}{c_\alpha}\|\bf(\bx)+\bg(\bx)\bu\|^2.
\end{multline*}
On the other hand, the  MCBF approach from Sec.~\ref{sec:EMCBF} yields:
\begin{align*}
     & \begin{bmatrix}
           \bzero_{n\times n} & \dot \bx \\
           \dot \bx^\top      & 0
       \end{bmatrix}\succeq-c_\alpha\begin{bmatrix}
       \bI_{n\times n}     & \bx-\bx_\obs \\     (\bx-\bx_\obs)^\top & R^2
        \end{bmatrix}
    \\
     & \iff c_\alpha R^2 - \frac{1}{c_\alpha}\|\bf(\bx)+\bg(\bx)\bu+c_\alpha(\bx-\bx_\obs)\|^2 \geq 0,
\end{align*}
which differs by the extra term $(1/c_\alpha)\|\bf(\bx)+\bg(\bx)\bu\|^2$. This difference hints at a potentially useful generalization, whose implications are left for future work.

\section{Applications}
\label{sec:applications}
The MCBF formulation offers safe control solutions for a broader class of safety problems. This section demonstrates its versatility through discussions of specific applications.

\subsection{Localization via Nonlinear Least Squares}
In many robotic systems, the full state $\bx$ is not directly measured. Instead, we have sensor measurements $\by$ through an output of some function $\map{\bmm}{\real^n}{\real^p}$:
$$
\by = \bmm(\bx).
$$
One approach for computing a state estimation $\hat \bx\in\real^n$ from measurements is through solving nonlinear least square (NLS) problems:
$$
\hat\bx = \argmin_{\bx\in\real^n} \|\by-\bmm(\bx)\|^2,
$$
a method increasingly used in robotics in fields such as SLAM, pose graph optimization, and sensor fusion. One recent work~\cite{SGG-DT-BTL:24} has begun to explore how control barrier functions can be used to ensure NLS remains well-posed during robot operation.

In particular, the Hessian matrix of the cost function:
$$
\bH(\bx;\by) = \nabla^2 \|\by-\bmm(\bx)\|^2
$$
must remain positive definite as the robot navigates through its environment. This would ensure the optimization problem admits a unique solution; otherwise, the resulting estimates may be ambiguous or discontinuous. The work~\cite{SGG-DT-BTL:24} approaches this problem using standard CBFs on the smallest eigenvalue of $\bH$. As a result, additional constraints must be imposed to avoid the possibility of nonsmoothness. In contrast, our MCBF formulation handles such matrix inequalities directly and smoothly, allowing for the construction of a continuous safety filter that ensures well-posedness of the localization problem. Due to technical considerations beyond the scope of this paper, the full technical treatment of this application will appear in a separate paper.

\subsection{Collision Avoidance}
Our formulation expands the range of geometrical objects that CBFs can handle in collision avoidance tasks for robotic applications. Even when we limit the matrix $\bH$ to be linear in the state $\bx$, the set $\Cc$ in~\eqref{eq:safeset-matrix} describes a spectrahedron. This class of objects contains many familiar and useful shapes such as polyhedra, hyperspheres, cylinders, and cones. It also encompasses interesting objects such as elliptopes and objects whose shadows are ellipses with multiple foci (m-ellipses). As an example, Fig.~\ref{fig:elliptope} depicts the elliptope defined by:
\begin{equation}
\label{eq:elliptope}
    \bH(\bx) = \left\{\bx\in\real^3\middle|\begin{bmatrix}
        1 & x_1 & x_2\\ x_1 & 1 & x_3 \\ x_2 & x_3 & 1
    \end{bmatrix} \succeq 0\right\}.
\end{equation}
We hope the generalization to matrix inequalities will enable more sophisticated analytical representations of real-world obstacles.
\begin{figure}[h]
    \centering
    \includegraphics[width=0.95\linewidth]{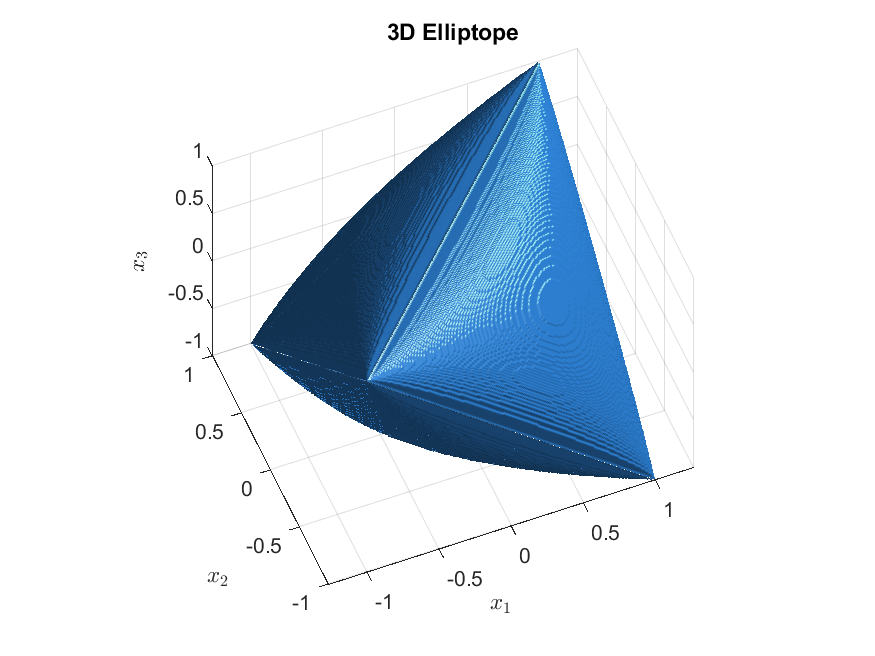}
    \caption{An example of a spectrahedron known as an \textit{elliptope}, as described by \eqref{eq:elliptope}.}
    \label{fig:elliptope}
\end{figure}

\subsection{Connectivity Maintenance}
\label{sec:connectivity}
An immediate application of MCBFs is the task of connectivity maintenance in a multi-robot system. We consider a group of $p$ robot agents. Let $\bx_i$ be the state of the $i$-th robot and $\bx$ be the aggregate of all the states evolving according to the control-affine dynamics~\eqref{sys:ctrl-affine}. One key safety constraint concerns the communication capability between robots.

We use a weighted adjacency matrix $\map{\bA}{\real^n}{\mathbb{S}^p}$ to describe connectivity between robots. Note here we only consider the case of symmetric adjacency matrices, which correspond to undirected communication graphs. In particular, the entry $A_{ij}(\bx)> 0$ if and only if robots $i$ and $j$ are connected. The Laplacian matrix is defined as $\bL(\bx)\defeq\bD(\bx)-\bA(\bx)$, where $\bD_{ii}(\bx)\defeq \sum_{j\in \until{p}} \mathbf{A}_{ij}(\bx)$ is the degree matrix.  By construction, the smallest eigenvalue $\phi_1(\bL(\bx))\equiv0$ is always zero, with an eigenvector being a vector of ones $\mathbf{1}_p$. More importantly, the second smallest eigenvalue $\phi_2(\bL(\bx))$ is useful for describing the connectivity of the robot network. A network is connected (i.e., there is a communication \textit{path} between any pair of robots) if and only if $\phi_2(\bL(\bx))>0$.

Connectivity can be enforced using a MCBF. By proposing:
\begin{equation}\label{eq:connectivity_MCBF}
    \bH(\bx) = \bL(\bx)+\frac{\varepsilon}{p}\mathbf{1}_p\mathbf{1}_p^\top-\varepsilon\bI_{p\times p}
\end{equation}
with $\varepsilon\geq 0$, the set $\Cc$ comprises only states $\bx$ for which the network is connected. If $\bH$ is an MCBF, we can design a safety filter to maintain the robot network connectivity. The CBF approach enables a simple integration of the connectivity constraint that minimally interferes with other robot tasks. In addition, our matrix-based formulation avoids the nonsmooth issues associated with eigenvalues (see \cite{PO-BC-LS-JC:23-auto} for the discussion on this issue).

\section{Multi-UAV System Demonstration}

\begin{figure*}[t]
\centering
\includegraphics[width=\linewidth]{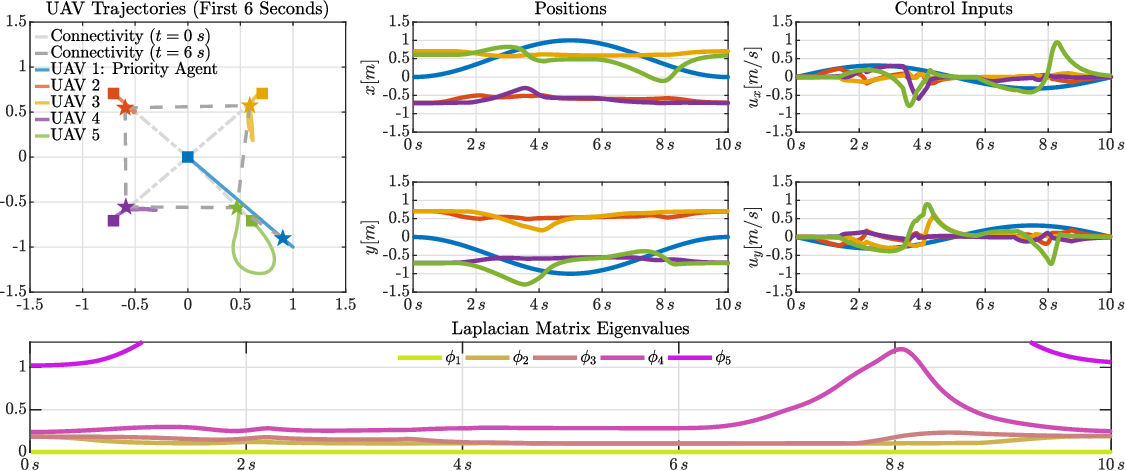}
\caption{Simulation--Five-Agent Connectivity. \textbf{[Top Left]} The simulated trajectories of all five UAVs during the first six seconds of the connectivity experiment. The priority agent (UAV 1) follows its time-varying reference while the other UAVs respond to maintain connectivity while avoiding collisions. \textbf{[Top Middle]} The 2-D position of all five UAVs over the course of the 10-second connectivity experiment. All agents deviate from their nominal starting positions to maintain forward invariance of the prescribed safe set. \textbf{[Top Right]} The 2-D control input of all UAVs. The control signals are continuous and physically achievable. \textbf{[Bottom]} The eigenvalues of the Laplacian matrix. Four of the five eigenvalues are positive, confirming that all agents maintained connectivity.}
\label{fig:simulation}
\end{figure*}

\subsection{Scenario Description}

To demonstrate the novel utility of our proposed MCBF-based safety filter, we further explore the application of connectivity maintenance, as introduced in Section\,\ref{sec:connectivity}, by constructing a two-dimensional multi-agent scenario which can be applied to a swarm of five quadrotor \textit{unmanned aerial vehicles} (UAVs). In this scenario, we first command each UAV to track a nominal reference\footnote{Note that the reference position for UAV 1 (i.e. $\bx_{1,\des}$) is a time-varying function, while all other reference positions are static.} defined by:
\begin{align*}
   &\bx_{1,\mathrm{d}}(t) = \left(1 - \cos\left(\frac{\pi}{5}t\right)\right)
    \begin{bmatrix}
        \frac{1}{2}\\\frac{-1}{2}
    \end{bmatrix},~\bx_{2,\mathrm{d}} = 
    \begin{bmatrix}
        \frac{-\sqrt2}{2}\\\frac{\sqrt2}{2}
    \end{bmatrix},\\
    &\bx_{3,\mathrm{d}} = 
    \begin{bmatrix}
        \frac{\sqrt2}{2}\\\frac{\sqrt2}{2}
    \end{bmatrix},~\bx_{4,\mathrm{d}} = 
    \begin{bmatrix}
        \frac{-\sqrt2}{2}\\\frac{-\sqrt2}{2}
    \end{bmatrix},~ 
    \bx_{5,\mathrm{d}} = 
    \begin{bmatrix}
        \frac{5\sqrt2-1}{10}\\\frac{-\sqrt2}{2}
    \end{bmatrix},
\end{align*}

\noindent where $\bx_\mathrm{i,\mathrm{d}}\in\R^2$ for $i\in\until{5}$ represents the desired position of the $i$-th UAV. Each reference is tracked with the nominal proportional tracking controller $\bk_{\des}:\R^n\rightarrow\R^m$ defined as:

\begin{equation}
\label{eq:p-control}
\bu_{i,\des}=\bk_{\des}(\bx_i) = k(\bx_{i,\des} - \bx_i) + \dot{\bx}_{i,\des},
\end{equation}

\noindent with proportional gain $k\in\realpos$. This simple control law generates a nominal input $\bu_{i,\des}$ that yields exponentially-stable lag-free tracking for systems with fully-actuated single-integrator dynamics. 

Next, we pass these nominal commands through the MCBF-based connectivity maintenance safety filter, as proposed in Section\,\ref{sec:connectivity}. Connectivity between UAVs is captured by a proximity-based adjacency matrix $\bA$ with entries:
$$
A_{ij}(\bx)= \begin{cases}\exp(1-\|x_i-x_j\|^2/R^2)-1 & \text{if}~\|x_i-x_j\| \leq R \\
0 & \text{otherwise}\end{cases},
$$
where $R\in\R_{>0}$ denotes the maximum communication range between agents. For the purposes of this demonstration, we set $R=1.3\;m$, which yields a network that is initially connected at $t=0$ and would become disconnected shortly thereafter without safety filtering. Using this adjacency matrix, we construct the MCBF in~\eqref{eq:connectivity_MCBF} with $\varepsilon=0.1$ for adequate robustness. The MCBF condition~\eqref{eq:CBC-matrix} can be verified since the overall system can be modeled using single integrators, allowing for omnidirectional UAV motion.

In addition to the connectivity constraint, we enforce pairwise collision avoidance between UAVs using standard scalar-valued CBFs. Between UAV $i$ and $j$, we have:
$$
h_\text{col}^{ij}(\bx_i,\bx_j) =\|\bx_i-\bx_j\|^2-4r_\text{agent}^2,
$$
where $r_\text{agent}\in\realpos$ is the prescribed collision radius of each agent, which is set to $r_\text{agent}=0.25\;m$. Furthermore, we designate a priority agent (UAV 1) whose control is fixed at its desired value, i.e., $\bu_1 = \bu_{1,\mathrm{d}}$.

Finally, the safety filter is formulated with the CBF-SDP:
\begin{subequations}
\label{eq:sdp}
\begin{align}
    \bk(\bx) = \argmin_{\bu\in\real^m} \quad & \sum_{i=1}^5 \|\bu_{i,\des}-\bu_i\|^2,\\
    \text{s.t.} \quad & \dot \bH(\bx,\bu) \succeq -\balpha_1(\bH(\bx)),\\
    & \dot{h}_{\text{col}}^{ij}(\bx,\bu) \geq -\alpha_2(h_{\text{col}}^{ij}(\bx)), \\
    & \qquad \qquad  \forall i< j,~\text{with}~i,j\in\until{5}\nonumber\\
    & \bu_1 = \bu_{1,\mathrm{d}},
\end{align}
\end{subequations}
where $\bx=[\bx_1^\top,...,\bx_5^\top]^\top$ and $\bu=[\bu_1^\top,...,\bu_5^\top]^\top$. This centralized safety filter computes the minimum modification to the nominal control action for each UAV that guarantees forward invariance of all prescribed safe sets.

\subsection{Simulation Results}

Before evaluating this proposed scenario on a physical UAV system, we construct an ideal simulation environment to establish a nominal performance baseline. Our simulation models a multi-agent system operating in 2-D space, where each agent follows the single integrator dynamics: $\dot \bx_i = \bu_i$ with $\bx_i\in\real^2$. The resultant behavior is depicted in Fig.~\ref{fig:simulation}.

Inspecting the figure, we see that all five simulated trajectories remain connected throughout the entire simulation. As UAV 1 (priority agent) leaves the starting formation, UAVs 2-4 respond by abandoning their nominal positions, following UAV 1 to stay connected. Meanwhile, UAV 5 is initially forced to avoid a collision with UAV 1. Subsequently, it returns to its nominal position and serves as a relay between UAV 1 and UAVs 2-4. The control signals produced by \eqref{eq:sdp} are continuous, and the nontrivial eigenvalues of the Laplacian matrix remain positive. Fig.~\ref{fig:simulation} (bottom) shows that the eigenvalues $\phi_2$ and $\phi_3$ merge several times during the simulation. For such behavior, methods that treat an individual eigenvalue as a scalar-valued CBF~\cite{BC-LS:20} would typically suffer from high-frequency chatter in the control signals (see Fig.~\ref{fig:chatter}) due to the discontinuity of the associated barrier condition, cf. \cite{PO-BC-LS-JC:23-auto}. However, our novel MCBF approach produces a chatter-free control input (see Fig.~\ref{fig:simulation} (top right)).

\begin{figure}[h]
    \centering
    \includegraphics[width=\linewidth]{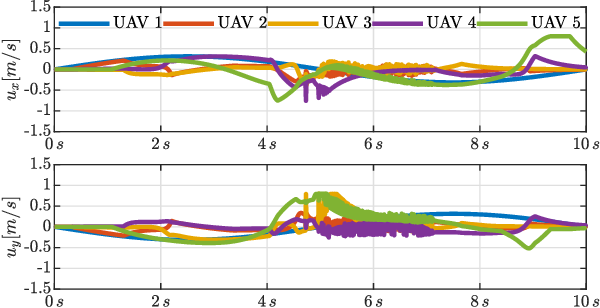}
    \caption{An example of control signal chattering when using eigenvalues as scalar-valued CBFs~\cite{BC-LS:20} in the simulated connectivity scenario. The eignenvalues merge at various times, and the discontinuity of the associated barrier condition produces chatter which could destabilize the physical system.}
    \label{fig:chatter}
\end{figure}

In addition to these pre-computed simulation results, we have provided a GitHub repository with Python code to demonstrate the use of a MCBF-based safety filter for maintaining multi-agent network connectivity. The repository is available at the following URL: \url{https://github.com/pioong/connectivity_MCBF}. Within this repository is an interactive user-controlled simulation, implemented using a \texttt{pygame} environment, in which the user can arbitrarily prescribe the priority agent (and its corresponding position reference) in real-time. Additional details are provided in the \texttt{README} file included in the repository.

\begin{figure*}[t]
\centering
\includegraphics[width=\linewidth]{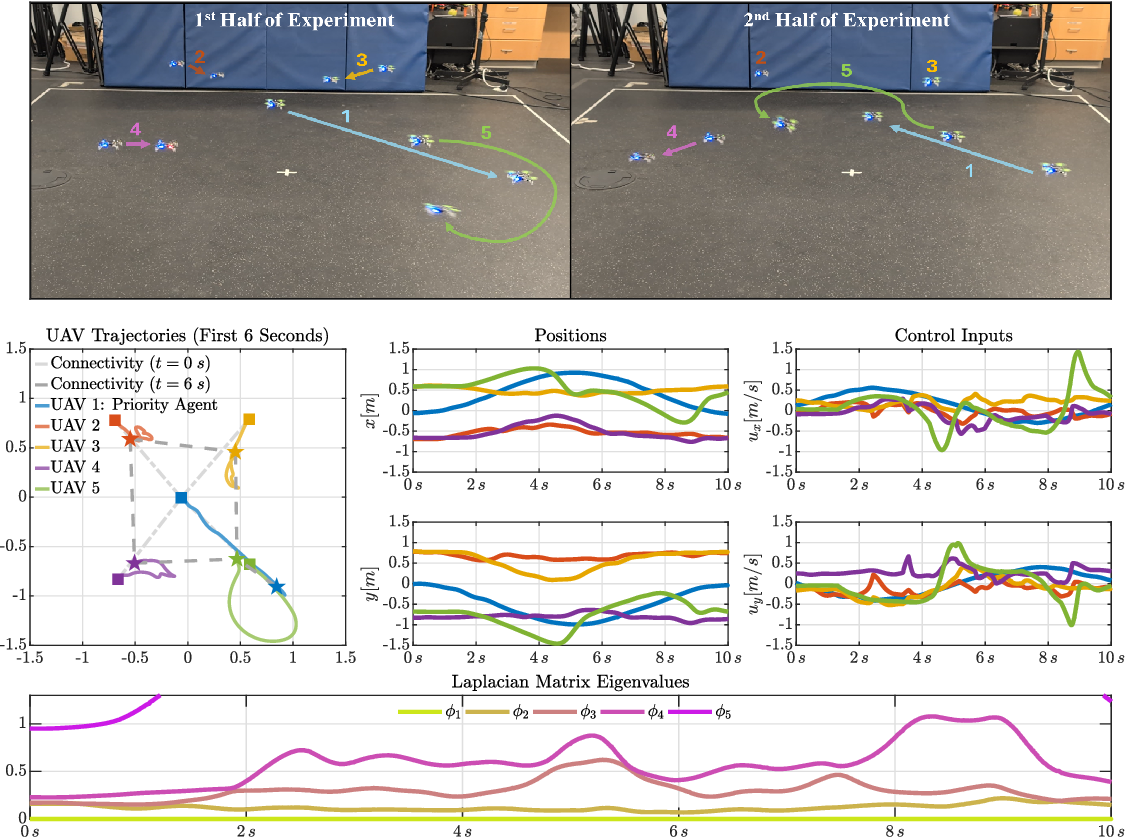}
\caption{Experiment--Five-Agent Connectivity. \textbf{[Top]} Experiment timelapse photo showing the physical positions and approximate trajectories at various times throughout the experiment. The first photo depicts $t\in[0,5]$, while the second photo depicts $t\in[6,9]$. \textbf{[Middle Left]} The measured UAV trajectories during the first six seconds of the experiment. Qualitatively, the overall behavior matches the simulation results. \textbf{[Middle Center]} Measured UAV positions. These values closely align with the simulated expectation. \textbf{[Middle Right]} Computed UAV control inputs. The control signals deviate marginally from the simulation, which is expected as the real UAVs are subject to physical disturbances, sensor noise, and model uncertainty. \textbf{[Bottom]} Laplacian matrix eigenvalues. As with the simulation, all agents maintained connectivity.}
\label{fig:experiment}
\end{figure*}

\subsection{Experiment Results}

Following our successful simulated demonstrations, we physically implemented the aforementioned connectivity maintenance scenario on a swarm of Crazyflie 2.1+ UAVs. We used an OptiTrack motion capture system to perform localization at 240~Hz, and we executed the control laws in \eqref{eq:p-control} and \eqref{eq:sdp} on a centralized offboard PC \footnote{Computations were performed on a custom PC with an AMD Ryzen 9 9950x CPU running an Ubuntu 24.04 operating system.}. Using a \texttt{clarabel} SDP solver, we achieved optimal safety filter solutions on the order of 1-2~milliseconds, enabling the synthesis of safe 2-D velocity commands at the state update frequency of 240~Hz. Simultaneously, we generated a vertical velocity command through a decoupled altitude tracking controller, leveraging the same control law as in \eqref{eq:p-control}. After converting these 3-D velocities to acceleration commands (by backstepping through an additional proportional feedback loop), we radioed them to each UAV in real-time, where they were tracked at 1000~Hz by an onboard low-level quaternion-based attitude controller. The results of this experimental demonstration can be seen in Fig.~\ref{fig:experiment}, and extended video footage is available at \url{https://youtu.be/CuSsAjOmPik}.

When we examine the experimental data, our first key observation is that the results closely parallel those of the simulation. This is especially apparent when inspecting the position plots in Fig.~\ref{fig:experiment} (middle left \& middle center) which bear a strong resemblance to those in Fig.~\ref{fig:simulation}, despite the obvious limitations of using a single-integrator model to represent UAV flight dynamics. Another important observation is that the control signal in Fig.~\ref{fig:experiment} (middle right) is continuous, physically reasonable, and chatter-free, which further highlights the benefit of our MCBF-based architecture, when compared to existing methods. Lastly, we note the eigenvalues of the Laplacian matrix. As in the simulation, these eigenvalues merge without creating problems in the control signals, and they stay above zero for the duration of the experiment, confirming that connectivity is maintained.

\section{Conclusion}
This paper generalizes the control barrier function framework to accommodate matrix-valued functions. We have considered both the semidefinite and indefinite constraints on these matrix functions. We established matrix barrier conditions for both autonomous and control systems, and proved the continuity of the CBF-SDP controller derived from the proposed MCBFs. The framework was demonstrated through a multi-robot network connectivity maintenance task, validated in both simulation and quadrotor experiments. Our future work focuses on both theoretical development and practical applications, such as identifying sufficient conditions for local Lipschitz continuity of the CBF-SDP controllers, and exploring applications beyond connectivity maintenance, including Boolean compositions of scalar-valued CBFs.

\bibliographystyle{ieeetr}
\bibliography{
    bib/alias,
    bib/PO,
    bib/main-Pio
}

\vspace*{-7ex}

\begin{IEEEbiography}[{\includegraphics[width=1in,height=1.25in,clip,keepaspectratio]{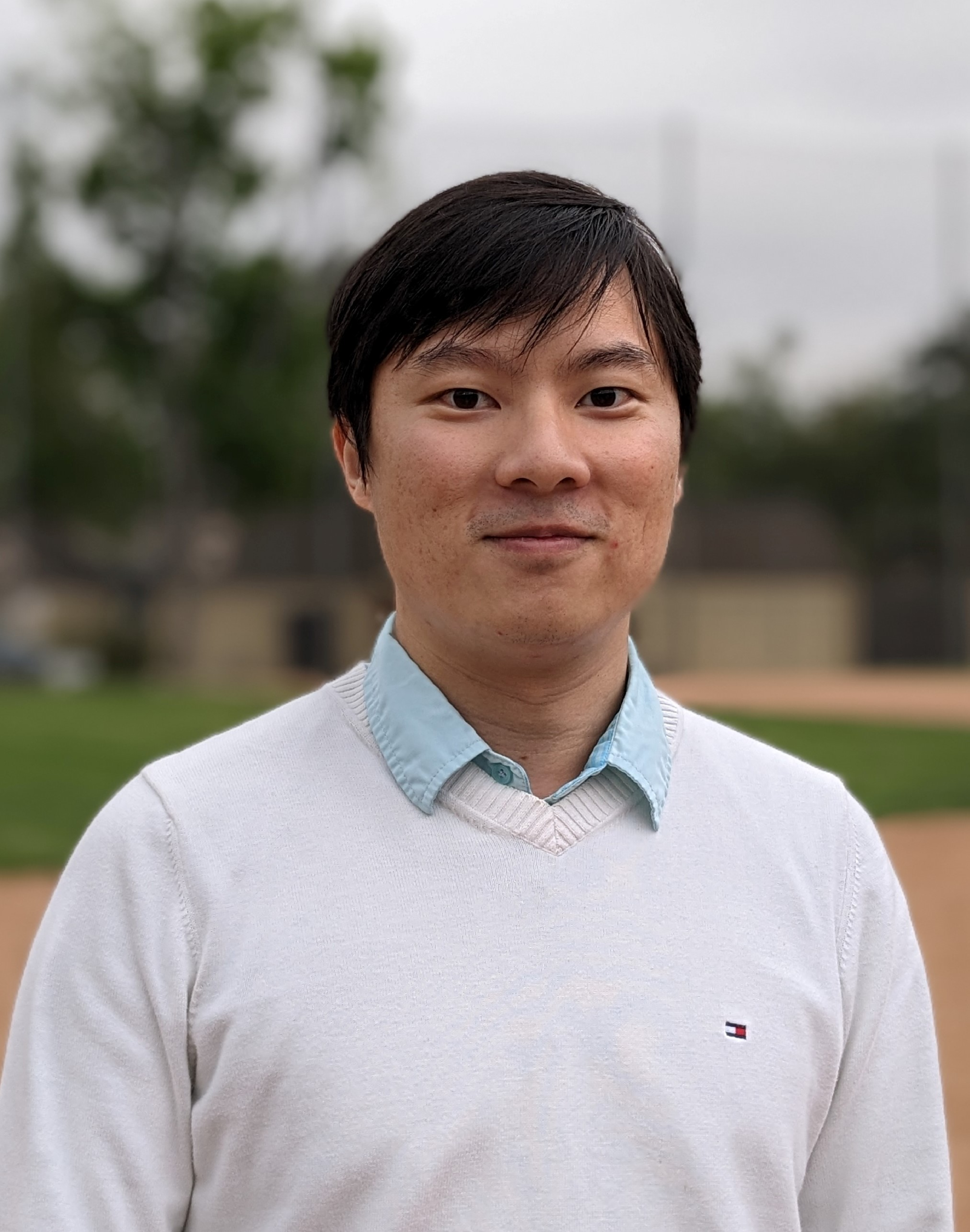}}]{Pio
    Ong} (Member, IEEE)
received the B.S. degree in Aerospace Engineering from University of California, San Diego (UCSD) in 2012, a M.S. degree in Astronautical Engineering from University of Southern California (USC) in 2013. During Fall and Winter of 2014, he worked at Space Exploration Technologies Corp (SpaceX). He received the Ph.D. degree in Engineering Sciences (Aerospace Engineering) from UCSD in 2022 with Professor Jorge Cort\'{e}s as his advisor. Currently, Pio is a Postdoctoral Scholar at California Institute of Technology (Caltech), under the supervision of Professor Aaron D. Ames. His current research interests include control barrier functions, event-triggered control, resilient autonomy, and applications to multi-agent and aerospace systems.
\end{IEEEbiography}

\vspace*{-7ex}

\begin{IEEEbiography}
[{\includegraphics[width=1in,height=1.25in,clip,keepaspectratio]{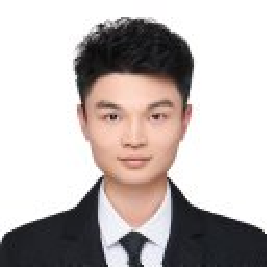}}]{Yicheng Xu} (Graduate Student Member, IEEE) received the B.S. degree in automation from Southeast University, Nanjing, China in 2020 and the M.S. degree in mechanical engineering from University of California, Irvine in 2021. He has been a Ph.D. student since 2021. His research interests include distributed control of multiagent systems, event-triggered control and anti-windup control.
\end{IEEEbiography}

\vspace*{-7ex}

\begin{IEEEbiography}
[{\includegraphics[width=1in,height=1.25in,clip,keepaspectratio]{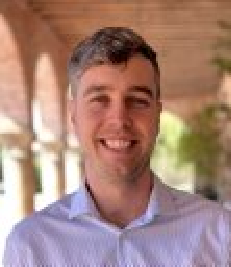}}]{Ryan M. Bena} (Member, IEEE) received the B.S. degree in mechanical engineering from the University of California, Berkeley, CA, USA in 2012 and the M.S. degree in aerospace engineering from the University of Southern California (USC), Los Angeles, CA, USA, in 2018. From 2012 to 2019, he was an Aerospace Engineer with the US Air Force. He received his Ph.D. degree in aerospace engineering from USC in 2024, advised by Professor Quan Nguyen. He is currently a Postdoctoral Scholar at the California Institute of Technology under Professor Aaron D. Ames. His research interests include safety-critical controller design and analysis, collision avoidance for aerial robotics, and optimization-based control techniques. 
\end{IEEEbiography}

\vspace*{-7ex}

\begin{IEEEbiography}
[{\includegraphics[width=1in,height=1.25in,clip,keepaspectratio]{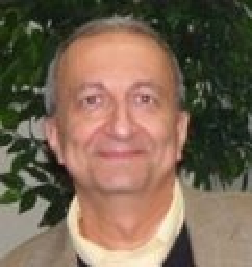}}]{Faryar Jabbari} (Life Senior Member, IEEE) is currently on the faculty of the Mechanical and Aerospace Engineering Department, University of California, Irvine, CA, USA. His research focuses on control theory and its applications. He was an Associate Editor for Automatica and IEEE Transactions on Automatic Control, and also the Program Chair for ACC-11 and CDC-09, and the General Chair for CDC-14.
\end{IEEEbiography}

\vspace*{-7ex}

\begin{IEEEbiography}[{\includegraphics[width=1in,height=1.25in,clip,keepaspectratio]{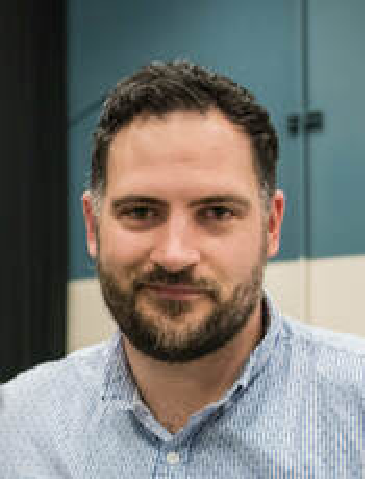}}]{Aaron D. Ames}
(Fellow, IEEE) received the joint B.S. degree in mechanical engineering and B.A. degree in mathematics from the University of St. Thomas, St Paul, MN, USA, in 2001, and the joint M.A. degree in mathematics and the Ph.D. degree in electrical engineering and computer sciences from the University of California (UC), Berkeley, Berkeley, CA USA, in 2006. He was an Associate Professor with Georgia Tech, Woodruff School of Mechanical Engineering, Atlanta, GA, USA, and the School of Electrical \& Computer Engineering, Atlanta. He is the Bren Professor of mechanical and civil engineering and control and dynamical systems with Caltech. He was a Postdoctoral Scholar in control and dynamical systems with Caltech from 2006 to 2008, and began his faculty career with Texas A\&M University, Texas, TX, USA, in 2008. His research interests include the areas of robotics, nonlinear, safety-critical control, and hybrid systems, with a special focus on applications to dynamic robots, both formally and through experimental validation. At UC Berkeley, he was the recipient of the 2005 Leon O. Chua Award for achievement in nonlinear science and the 2006 Bernard Friedman Memorial Prize in Applied Mathematics, NSF CAREER Award in 2010, 2015 Donald P. Eckman Award, and 2019 IEEE CSS Antonio Ruberti Young Researcher Prize.
\end{IEEEbiography}

\end{document}